\documentclass[letterpaper,journal]{IEEEtran}
\usepackage{amsmath,amsfonts}
\usepackage{algorithm}
\usepackage{array}
\usepackage[caption=false,font=normalsize,labelfont=sf,textfont=sf]{subfig}
\usepackage{textcomp}
\usepackage{stfloats}
\usepackage{xurl}
\usepackage[hidelinks]{hyperref}
\usepackage{verbatim}
\usepackage{multirow}
\usepackage{graphicx}
\usepackage{bbm}
\usepackage{cite}
\usepackage{vmr-symbols-vecbold}
\usepackage{standard-macros}
\usepackage{algorithm}
\usepackage{algpseudocode}
\usepackage{cuted}
\usepackage{mathdots}
\usepackage{booktabs}
\usepackage{multirow}
\usepackage{siunitx}

\usepackage{tikz,pgfplots}
\usepackage{pgfplotstable}
\usetikzlibrary{positioning}
\usetikzlibrary{decorations.pathreplacing}
\usetikzlibrary {arrows.meta,bending,positioning}
\pgfplotsset{compat=newest}
\usepgfplotslibrary{patchplots}
\usepgfplotslibrary{statistics}
\usetikzlibrary{backgrounds,plotmarks,fit,positioning,arrows,shapes,shapes.multipart,calc,arrows.meta}
\usetikzlibrary{fit,positioning,arrows,shapes,shapes.multipart,calc,arrows.meta,shapes.geometric,shapes.misc}
\usetikzlibrary{decorations.pathreplacing,angles,quotes,calligraphy}
\usetikzlibrary{automata}
\usetikzlibrary{arrows.meta, positioning, backgrounds, fit}

\usepackage{amsthm}
\newtheorem{theorem}{Theorem}

\DeclareMathOperator{\Bin}{Bin}

\DeclareMathOperator{\Unif}{Unif}
\DeclareMathOperator{\enc}{enc}
\DeclareMathOperator{\dec}{dec}
\DeclareMathOperator{\quant}{quant}
\DeclareMathOperator{\sens}{sens}
\newcommand{\adjustedbar}[1]{\overline{\hspace{-0.1em}#1\hspace{-0.1em}}}
\newcommand{\dTV}{\ensuremath{\overline{\opT\opV}}}

\makeatletter
\newcommand\fs@betterruled{
  \def\@fs@cfont{\bfseries}\let\@fs@capt\floatc@ruled
  \def\@fs@pre{\vspace*{6pt}\hrule height.8pt depth0pt \kern2pt}
  \def\@fs@post{\kern2pt\hrule\relax}
  \def\@fs@mid{\kern2pt\hrule\kern2pt}
  \let\@fs@iftopcapt\iftrue}
\floatstyle{betterruled}
\restylefloat{algorithm}
\makeatother

\begin{document}

\title{Type-Based Unsourced Multiple Access \\Over Fading Channels in Distributed MIMO  \\With Application to Multi-Target Localization}

\author{
Kaan Okumus, \textit{Graduate Student Member}, \textit{IEEE}, Khac-Hoang Ngo, \textit{Member}, \textit{IEEE}, \\ Giuseppe Durisi, \textit{Senior Member}, \textit{IEEE}, Erik G. Str\"om, \textit{Fellow}, \textit{IEEE}

\thanks{Kaan Okumus, Giuseppe Durisi, and Erik G. Str\"om are with the Department of Electrical Engineering, Chalmers University of Technology, 41296 Gothenburg, Sweden (email: \{okumus, durisi, erik.strom\}@chalmers.se).}
\thanks{Khac-Hoang Ngo is with the Department of Electrical Engineering, Link\"oping University, 58183 Link\"oping, Sweden (email: khac-hoang.ngo@liu.se).}
\thanks{This work was supported in part by the Swedish Research Council under grants 2021-04970 and 2022-04471, and by the Swedish Foundation for Strategic Research. The work of K.-H. Ngo was supported in part by the Excellence Center at Linköping – Lund in Information Technology (ELLIIT). This paper was presented in part at the IEEE International Workshop on Signal Processing Advances in Wireless Communications (SPAWC), Lucca, Italy, 2024 \cite{ngo2024_tuma} and at the IEEE International Symposium on Information Theory (ISIT), Ann Arbor, Michigan, USA, 2025 \cite{tuma_fading_2025}.}
}

\maketitle

\begin{abstract}
We consider the problem of type estimation over unsourced multiple access fading channels in distributed multiple-input multiple-output (D-MIMO) systems. Unlike classical unsourced multiple access, type-based unsourced multiple access (TUMA) aims to estimate the type, i.e., the empirical distribution of transmitted messages. We extend our prior work on TUMA over additive white Gaussian channels to fading scenarios in which neither the transmitters nor the receiver have channel state information. To mitigate the impact of path-loss variability, we employ location-based codebook partitioning: users with similar large-scale fading coefficients use the same codebook. 
The decoder is built on the multisource approximate message passing algorithm proposed by Çakmak \textit{et al.} (2025), and supports both centralized and distributed implementations. 
As an application, we demonstrate how TUMA enables efficient communication in a multi-target localization setting, where distributed sensors report to a D-MIMO receiver quantized target positions. 
We propose a performance cost function that combines localization errors with a misdetection penalty, and use it to characterize how performance depends on the fraction of resources assigned to sensing vs. communication, as well as on the number of bits used to quantize the positions of the targets. 

\end{abstract}

\begin{IEEEkeywords}
Type-based unsourced multiple access, massive random access, distributed MIMO, approximate message passing, multi-target localization.
\end{IEEEkeywords}

\section{Introduction}

\IEEEPARstart{M}{assive} machine-type communication (mMTC) is a key enabler for emerging 6G applications: it supports simultaneous transmissions from a large number of devices over a shared wireless medium. These devices typically transmit short packets in a sporadic and uncoordinated manner and operate under strict energy constraints~\cite{chen_mmtc, wu_mmtc, kalor_giuseppe_6G}. With the need to support massive device connectivity, any scheme that assigns dedicated resources such as time, frequency, or codes to each device is not scalable. To address this challenge, the unsourced multiple access (UMA) framework was introduced~\cite{polyanskiy2017}. In UMA, all users share a common codebook and the receiver is tasked with recovering the list of transmitted messages without identifying their sources. Recent significant theoretical and algorithmic advances in this area include the analysis of fundamental limits~\cite{lancho_uma_a_channel, gao_ml_bound, andreev_uma_projection_bound, ngo_2023_random_user_activity, frolov_monograph} on progressively more realistic channel models, and the design of coding schemes approaching these limits~\cite{frolov_monograph, fengler_sparc, amalladinne_enhanced_ccsamp, pradhan_graph_uma_decoder, vem_successive_decoder, fengler_nonbayesian, han_sparse_kronecker_decoder, decurninge_tensor_based_uma, pradhan_polar_code_uma}. A key assumption in UMA is that the event where multiple devices transmit the same message simultaneously, known as message collision, is treated as a decoding error. This is justified because, when each device selects its message independently and uniformly at random from a large message set, the probability of message collisions is negligible. However, this assumption is often violated in practical sensing and learning applications, where multiple users may observe the same phenomenon and generate identical messages. Examples include multi-target localization~\cite{Hoffman2004, Schuhmacher2008, Rahmathullah2017}, where sensors may report the same target position; federated learning~\cite{qiao_gunduz_fl, okumus2026tuma_fl}, 
where clients may transmit the same quantized model updates due to similarities in their local data; and point-cloud transmission~\cite{Bian24wpc}, where devices observing the same surface may produce identical descriptors. In such cases, message collisions occur naturally and can no longer be neglected. 
Conventional UMA decoders can identify only whether a message is present or absent. Consequently, when multiple users transmit the same message, their contributions are merged into a single detected message. As a result, valuable information about the underlying phenomenon may be lost. In the abovementioned applications, however, it is critical that the decoder provides an estimate of the multiplicity of each message. This motivates the need for a generalized UMA framework, in which the multiplicity of each message is reliably estimated. This paper addresses this gap by proposing type-based UMA (TUMA), which handles the estimation of message multiplicities over fading channels in a distributed multiple-input multiple-output (D-MIMO) architecture.

\begin{table*}[t]
\centering
\caption{Comparison Between This Work and Closely Related Prior Works}
\label{tab:comparison}
\renewcommand{\arraystretch}{1.15}

\begin{tabular}{c||c|c|c|c|c|c}
\toprule
Ref. &
Fading &
CSI required &
D-MIMO &
Collision-aware (TUMA) &
Decoder &
Application \\
\midrule

\cite{polyanskiy2017} &
no &
$-$ &
no &
no &
maximum likelihood &
$-$ \\

\cite{cakmak_2025_journal} &
yes &
no &
yes &
no &
multisource AMP &
$-$ \\

\cite{qiao_gunduz_fl} &
yes &
yes &
no &
yes &
AMP-DA & 
federated learning \\

$\,$\cite{ngo2024_tuma}$^\ast$ &
no &
$-$ &
no &
yes &
AMP &
multi-target localization \\

$\,$\cite{tuma_fading_2025}$^\ast$ &
yes &
no &
yes &
yes &
multisource AMP &
$-$ \\

\cite{okumus2026tuma_fl} &
yes &
no &
yes &
yes &
multisource AMP & 
federated learning \\

\midrule
this work &
yes &
no &
yes &
yes &
multisource AMP & 
multi-target localization \\

\bottomrule
\end{tabular}

\vspace{1mm}
\begin{flushleft}
\footnotesize{
\hspace{1.2cm} $^\ast$ conference papers that make up parts of this paper\\
\hspace{1.2cm} $-$ indicates that the feature is not supported or not considered
\vspace{-5mm}
}
\end{flushleft}
\end{table*}

\subsection{Related Works}

Estimating the multiplicity of each message is equivalent to characterizing the empirical distribution (also known as ``type") of the messages. An early formulation of type-based estimation appeared in the context of distributed detection in wireless sensor networks~\cite{liu_asymp_type_2004}, where sensors transmit local type summaries to a fusion center. This idea was later formalized under the name of type-based multiple access in~\cite{Mergen2006}, where message multiplicities are estimated via the use of orthogonal codebooks. This approach, however, suffers from limited scalability, due to the high spectral efficiency cost of orthogonality. To overcome this limitation, recent works extended type-based multiple access to nonorthogonal codebooks~\cite{dommel_jscbmp_2020, dommel_fog_2021, zhu_IB_tbma_2023}. 
However, these works consider only a small message set and do not address the massive and sporadic device activity that is typical in massive random access. A type estimation scheme with nonorthogonal codebooks has also been presented in~\cite{qiao_gunduz_fl} in the context of over-the-air computation for federated learning, but the proposed approximate message passing (AMP)-based digital aggregation (AMP-DA) decoder requires perfect channel state information (CSI) at the transmitter side to perform channel pre-equalization. A general type-based access framework supporting nonorthogonal transmissions, a large message space and a massive number of sporadically activated devices, and operating under practical constraints, such as the lack of CSI, is not available in the literature.

To address these limitations, in our previous work~\cite{ngo2024_tuma}, we introduced the TUMA framework---a general communication-theoretic solution for estimating message types under collisions. TUMA supports massive uncoordinated access with nonorthogonal codewords and enables type estimation via Bayesian inference. In~\cite{ngo2024_tuma}, three decoder architectures were considered: two based on expectation propagation techniques adopted from~\cite{Meng_NOMA_2021}, and one based on the AMP decoder from~\cite{fengler_sparc}. We showed that the AMP-based decoder achieves superior performance in terms of complexity and estimation accuracy. In~\cite{deekshith_tuma_bound}, we also derived an achievability bound for TUMA and showed a good agreement between the bound and the performance achievable using a practical coded compressed sensing (CCS)-AMP scheme~\cite{amalladine_ccs_amp}, adapted for TUMA. The analysis in~\cite{ngo2024_tuma, deekshith_tuma_bound} is, however, conducted for a simplified Gaussian channel model, in which small-scale fading is absent and all users are received at the same power. 

Such limitations have recently been overcome in~\cite{gkiouzepi2024jointmessagedetectionchannel, cakmak_2025_journal, cakmak_ISIT25}, where the authors analyze UMA over D-MIMO network architectures. The analysis in~\cite{gkiouzepi2024jointmessagedetectionchannel, cakmak_2025_journal, cakmak_ISIT25} relies on two key ideas: i) the assumption that an accurate model for the large-scale fading coefficients (LSFCs) at each point in the coverage area is available at the receiver; ii) the use of location-based codebook partitioning, where each user selects its codebook based on its LSFC profile. For this setup, 
in~\cite{gkiouzepi2024jointmessagedetectionchannel, cakmak_2025_journal, cakmak_ISIT25}, a decoding algorithm, referred to as multisource AMP, is proposed, which allows for the estimation of the transmitted messages, the channel statistics, and the user location without requiring any knowledge of the instantenous CSI. The use of a location-based codebook partitioning is advantageous, because it allows one to use a more concentrated prior for the path-loss vector at the denoiser, which is beneficial in terms of performance. However, the analysis in~\cite{gkiouzepi2024jointmessagedetectionchannel, cakmak_2025_journal, cakmak_ISIT25} does not address type estimation. Indeed, in~\cite{gkiouzepi2024jointmessagedetectionchannel,
cakmak_2025_journal, cakmak_ISIT25}, the Bayesian denoisers
are designed for joint message detection and channel estimation, under the assumption that each transmitted codeword is associated with a single user. In contrast, TUMA requires estimating the combined contribution of multiple users transmitting the same codeword, which necessitates a different prior model, denoiser design, and multiplicity estimation procedure.

\subsection{Contributions}

In this work, we extend our preliminary conference works~\cite{ngo2024_tuma,tuma_fading_2025} on TUMA framework to support a more realistic system model and evaluate it in a multi-target localization application. Table~\ref{tab:comparison} provides a high-level overview of the key differences between this work and the most closely related prior works. Our specific contributions are as follows:

\paragraph*{TUMA over fading channels in D-MIMO with random user activity}
We generalize the TUMA framework in~\cite{tuma_fading_2025} by including both small- and large- scale fading and by assuming that the number of active users is random and unknown to the receiver. As in~\cite{cakmak_2025_journal}, we assume perfect knowledge of LSFC at each point in the coverage area and employ location-based codebook partitioning. We adapt the multisource AMP algorithm proposed in~\cite{cakmak_2025_journal} to TUMA by modifying the Bayesian denoiser to account for message multiplicities. In multisource AMP, the denoiser estimates, for each codeword, the combined fading coefficients of all users who transmitted that codeword. Under message collisions, this requires integrating over multiple unknown user positions, which leads to high complexity. We employ Monte Carlo (MC) sampling to approximate these integrals. We also modify the Onsager correction term, which compensates for correlation across AMP iterations, to ensure consistency with the modified denoiser. Based on the AMP outputs, we perform Bayesian estimation of the message multiplicities and normalize them to obtain the estimated type. We additionally introduce a distributed variant of our decoder, inspired by the distributed AMP (dAMP) algorithm~\cite{bai_larsson_damp}, originally  proposed in the context of activity detection under perfect per user LSFC information. The distributed variant leads to a lower computational complexity, which makes it suitable for large-scale D-MIMO systems.

\paragraph*{Application to multi-target localization and end-to-end tradeoff evaluation}
We demonstrate how TUMA can be applied to multi-target localization, where distributed sensors detect targets and transmit their quantized positions via unsourced access. Localization problems in D-MIMO networks have also been studied via integrated sensing and communication approaches, where targets are localized through the reflected communication paths they induce~\cite{gao_isac_comm_25}. In contrast, we adopt a sensor-assisted architecture in which sensing and communication are performed separately, and we focus on providing reliable communication of sensed target information under message collisions. This architectural choice is consistent with the transparent sensing paradigm defined and studied in~\cite[Sec.~5.4]{3gpp22837}, in which sensing measurements are first acquired by distributed sensors and subsequently communicated through digital links. Specifically,
the sensing model used in our system is based on a practical millimeter-wave (mmWave) 5G simultaneous localization and mapping (SLAM) scenario, with detection probabilities depending on the path-loss~\cite{henk_slam_pdet_2020}. Communication instead is performed over a lower frequency narrowband channel. For this setting, we introduce a prior on the message multiplicities that reflects the sensing model. To assess performance, we fix a total transmission blocklength and split it into a phase devoted to target sensing and one devoted to communication of the estimated target positions. A larger sensing blocklength improves detection but reduces the blocklength available for communication, which degrades type estimation. A large communication blocklength has the opposite effect. The quantization resolution adds a third component to this tradeoff: using more bits reduces localization error but enlarges the message space, making type estimation more challenging.
    
To evaluate these effects, we employ complementary metrics. We use total variation (TV) distance to measure communication errors, and Wasserstein distance to measure localization errors, reflecting both quantization and communication effects. We further use empirical misdetection probability to quantify sensing performance. We combine Wasserstein distance and empirical misdetection probability into a novel cost function, which is inspired by the generalized optimal sub-pattern assignment (GOSPA) metric~\cite{Rahmathullah2017}. We use the proposed GOSPA-like metric to investigate, via numerical simulations, how multi-target localization performance depends on the fraction of blocklength devoted to sensing vs. communication, as well as on the resolution of the quantizer used to describe the positions of the targets.

\subsection{Paper Outline}

The remainder of the paper is organized as follows. In Section~\ref{sec-systmodel}, we introduce the proposed TUMA framework, including the encoder, the channel model, and the type-based decoder. In Section~\ref{sec-proptumadec}, we describe in detail the decoding algorithm, covering the multisource AMP formulation and the modified Bayesian denoiser. We also present an efficient MC approximation of the high-dimensional integrals appearing in the decoding algorithm and a distributed implementation. In Section~\ref{sec:MTL-TUMA}, we discuss how to use TUMA in a multi-target localization scenario, introduce the sensing model and prior, and define the relevant performance metrics. In Section~\ref{Sec:SimResults}, we provide simulation results, unveiling the tradeoff between sensing, quantization, and communication. We conclude the paper in Section~\ref{Sec:Conclusion}.

\subsection{Notation}

We denote system parameters by nonitalic uppercase letters (e.g., $\uA$), sets by calligraphic letters (e.g., $\setS$), scalar variables by italic lowercase letters (e.g., $x$),  vectors by boldface italic lowercase letters (e.g., $\vecx$), and matrices by boldface nonitalic uppercase letters (e.g., $\mathbf{X}$). We write the $(a,b)$th element of $\mathbf{X}$ as $[\mathbf{X}]_{a,b}$, and the $b$th element of $\vecx$ as $[\vecx]_b$. We denote the $n \times n$ identity matrix by $\mathbf{I}_n$, and the all-zero vector by~$\veczero$. Transposition and Hermitian transposition are indicated by the superscripts $\tp{}$ and $\herm{}$, respectively. We denote the complex proper Gaussian vector distribution with mean $\veczero$ and covariance $\mathbf{A}$ by $\mathcal{C}\mathcal{N}(\veczero, \mathbf{A})$, and its probability density function by $\mathcal{C}\mathcal{N}(\cdot; \veczero, \mathbf{A})$. The Binomial distribution with $n$ trials and success probability~$p$ is denoted by $\Bin(n,p)$, and its probability mass function by $\Bin(\cdot; n,p)$. The uniform distribution over the interval $[a,b]$ is denoted by $\Unif(a,b)$ and the $\ell_p$-norm by $\vecnorm{\cdot}_p$; $[n]$ stands for $\set{1,\dots,n}$. We denote the Kronecker delta function by $\delta(\cdot)$, the indicator function by $\mathbbm{1}\{\cdot\}$, the Kronecker product by $\otimes$, and elementwise multiplication by $\odot$; $\diag(x_1, \cdots, x_n)$ is a diagonal matrix with $x_1, \dots, x_n$ as its diagonal entries. The notation $\sim_{\text{i.i.d.}}$ indicates that the elements of a matrix are drawn independently from the specified distribution. Finally, we denote the probability simplex over the set $[\uM]$ by $\mathcal{P}([\uM])$.

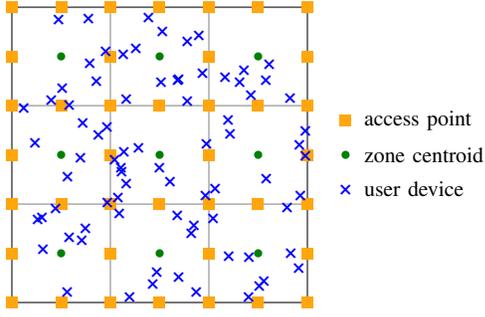
\begin{figure}[t!]
    \centering
    \resizebox{0.75\columnwidth}{!}    {\definecolor{darkgreen}{rgb}{0.0, 0.5, 0.0}
\definecolor{yelloworange}{rgb}{1.0, 0.65, 0.05}
\definecolor{bluecolor}{rgb}{0.0, 0.0, 1.0}
\def\SensorSize{0.25} 

\begin{tikzpicture}

\def\N{4} 
\def\D{4} 
\def\APsize{0.3} 
\def\CrossSize{0.1} 

\draw[black, very thick] (0, 0) rectangle (3*\D, 3*\D);

\foreach \x in {1,2} {
    \draw[gray, thick] (\x*\D, 0) -- (\x*\D, 3*\D);
}
\foreach \y in {1,2} {
    \draw[gray, thick] (0, \y*\D) -- (3*\D, \y*\D);
}

\foreach \x in {0,1,2,3} {
    \foreach \y in {0,1,2,3} {
        \node[fill=yelloworange, rectangle, scale=1.5, minimum width=\APsize cm, minimum height=\APsize cm] at (\x*\D, \y*\D) {};
        \ifnum\y<3
            \node[fill=yelloworange, rectangle, scale=1.5, minimum width=\APsize cm, minimum height=\APsize cm] at (\x*\D, \y*\D + 2) {};
        \fi
        \ifnum\x<3
            \node[fill=yelloworange, rectangle, scale=1.5, minimum width=\APsize cm, minimum height=\APsize cm] at (\x*\D + 2, \y*\D) {};
        \fi
    }
}

\foreach \x in {0,1,2} {
    \foreach \y in {0,1,2} {
        \node[fill=darkgreen, circle, scale=1] at (\x*\D+2, \y*\D+2) {};
    }
}

\foreach \x/\y in {
4.53/10.09, 7.40/3.12, 10.32/9.04, 7.49/0.42, 4.37/3.63, 2.99/3.00, 8.81/1.86, 4.78/0.21, 0.94/6.49, 11.92/6.95, 
4.43/5.46, 8.66/9.16, 7.74/9.33, 1.78/3.82, 4.65/8.28, 7.12/8.19, 3.52/6.80, 3.16/9.72, 6.43/4.89, 7.29/2.80, 
6.78/9.08, 9.43/9.45, 11.68/6.40, 8.78/7.42, 2.84/2.51, 11.92/5.96, 7.91/6.46, 6.12/11.01, 8.26/4.62, 10.05/0.66, 
4.65/4.84, 2.25/0.43, 4.17/5.80, 6.73/9.02, 9.62/0.21, 5.72/0.76, 2.33/8.03, 1.23/3.44, 7.12/10.61, 5.89/1.24, 1.05/3.37, 7.60/10.83, 4.46/5.31, 5.14/6.29, 2.33/2.65, 1.60/8.23, 9.71/8.44, 3.11/11.54, 6.77/1.02, 6.72/3.54, 6.00/5.49, 2.79/5.87, 11.33/1.98, 11.65/1.39, 4.31/4.28, 11.72/4.36, 2.05/8.70, 6.06/8.94, 0.48/7.88, 8.86/6.85, 1.27/2.16, 10.45/9.68, 8.17/3.41, 3.87/7.12, 1.90/11.50, 3.88/4.04, 7.87/4.33, 4.55/6.12, 10.24/0.87, 9.63/1.83, 5.55/11.56, 2.27/5.11, 2.89/7.30, 3.44/8.99, 10.33/5.04, 3.81/10.20, 5.04/10.32, 11.30/8.30, 9.26/8.95, 11.18/3.85
} {
    \node[bluecolor, scale=2.5] at (\x, \y) {\textbf{\texttimes}};
}

\node[fill=yelloworange, rectangle, scale=1.5, minimum width=\APsize cm, minimum height=\APsize cm] at (13.55, 7.4) {};
\node[right, scale=2.5] at (14, 7.4) {access point};
\node[fill=darkgreen, circle, scale=1] at (13.55, 6.0) {};
\node[right, scale=2.5] at (14, 6.0) {zone centroid};
\node[bluecolor, scale=2.5] at (13.55, 4.6) {\textbf{\texttimes}};
\node[right, scale=2.5] at (14, 4.6) {user device};

\end{tikzpicture}}
    \caption{An example of the proposed zone partitioning. In the example, we have $\uU=9$ zones. This partitioning is compatible with the distance-dependent path-loss model described in Section~\ref{SubSec:SimSetup}.}
    \label{fig:topology}
    \vspace{-0.5cm}
\end{figure}

\section{System Model} \label{sec-systmodel}

We consider a D-MIMO network in which $\uB$ access points (APs) are connected to a central processing unit (CPU) via fronthaul links. The APs collaboratively serve single-antenna users, randomly located in a two-dimensional coverage area $\setD$. The area is partitioned into $\uU$ nonoverlapping zones~$\{\setD_u\}_{u\in [\uU]}$, such that $\setD = \bigcup_{u\in[\uU]} \setD_u$ and $\setD_u \cap \setD_{u'} = \emptyset$, $\forall u \neq u'$. 
Each zone contains locations with similar LSFCs, so that users within the same zone experience approximately the same path-loss towards the APs. This partitioning mitigates the effect of path-loss variability~\cite{cakmak_2025_journal}.\footnote{In practice, one can divide $\setD$ into $\uU$ zones by quantizing the LSFC vector at each position in $\setD$ using a codebook of size $\uU$. Each UE then estimates its LSFC coefficients using control signals transmitted by each AP, and selects its zone, and, hence, the corresponding codebook, on the basis of its quantized estimated LSFC vector.} Each AP~$b$ is located at position $\vecnu_b \in \setD$ and equipped with $\uA$ antennas, yielding $\uF = \uA \times \uB$ antennas in total. Fig.~\ref{fig:topology} illustrates an example of the network topology, where the area is divided into a $3\times 3$ grid of square zones, with APs evenly placed along the zone boundaries. While our model and design are applicable to general topologies as long as users with similar path-loss are grouped within the same zone, we use this specific topology in the simulations in Section~\ref{Sec:SimResults}.

We assume that $\uK$ users are present in the system, with $\uK$ fixed and known by the receiver. In each communication round, a subset of the users becomes active. The number of active users, denoted by $K_\mathrm{a}$, is random and unknown to the receiver. Within each zone $u$, the number of active users, denoted by $K_{\mathrm{a},u}$, is also random and unknown, with $K_\mathrm{a} = \sum_{u=1}^\uU K_{\mathrm{a},u}$. Each active user $j$ in zone $u$ selects a message $W_{u,j}$ from the message set $[\uM]$, where $\uM$ is the total number of possible messages. These messages might be quantized representations of application-specific information, such as target positions in multi-target localization or local model updates in federated learning. In Section~\ref{sec:MTL-TUMA}, we will focus on multi-target localization and provide a specific mechanism for user activation and message generation in such a setup. 

The overall TUMA framework, including message encoding, transmission over the fading channel, and decoding at the receiver, is summarized in Fig.~\ref{fig:block-diagram-squeezed}. Next, we provide a detailed description of each block in the figure.

\subsection{Encoder} \label{subsec-encoder}

The system employs a UMA codebook $\mathbf{C} \in \mathbb{C}^{\uN_{\mathrm{c}} \times \adjustedbar{\uM}}$, where $\uN_{\mathrm{c}}$ is the blocklength and $\adjustedbar{\uM} = \uU \cdot \uM$ is the total number of codewords. The codebook is evenly partitioned into zone-specific subcodebooks: $\mathbf{C} = [\mathbf{C}_1, \dots, \mathbf{C}_{\uU}]$, where $\mathbf{C}_u = [\vecc_{u,1}, \dots, \vecc_{u,\uM}] \in \mathbb{C}^{\uN_{\mathrm{c}} \times \uM}$. The set of column vectors of $\mathbf{C}_u$ forms the set of codewords for zone $u$, denoted as $\setC_u = \{\vecc_{u,1},\dots, \vecc_{u,{\uM}}\}$, where $\lVert \vecc_{u,m}\rVert_2^2 = 1$, $\forall u\in[\uU],\, m\in [\uM]$. The encoder $\enc_{u} \sothat [\uM] \rightarrow \setC_u$ maps each message $W_{u,j}$ to the codeword $\enc_{u}(W_{u,j}) = \vecc_{u,W_{u,j}}$. Note that all users within the same zone use the same encoder.

\subsection{Multiplicity and Type} \label{SubSec:MultType}

We denote the number of users transmitting message~$m$ in zone $u$ by $k_{u,m} \in \{0,1,\dots,K_{\mathrm{a},u}\}$. These form the multiplicity vector $\veck_u = \tp{[k_{u,1},\dots,k_{u,{\uM}}]}$ with $ \lVert \veck_u \rVert_1 = K_{\mathrm{a},u}$. The global multiplicity vector is defined as $\veck = \tp{[k_{1},\dots,k_{{\uM}}]}$ with $k_m = \sum_{u=1}^\uU k_{u,m}$. The type is then obtained as the vector of normalized multiplicities, i.e., $\vect = \tp{[t_{1},\dots,t_{{\uM}}]}$ with $t_{m} = k_{m}/K_{\mathrm{a}}$. This vector represents the empirical distribution of transmitted messages across all active users.

\begin{figure}[t!]
    \centering
    \vspace{-0.1cm}
    \begin{tikzpicture}[node distance=0.5cm and 0.7cm, font=\scriptsize]

    
    \node[draw, rectangle, 
    minimum width=0.9cm, minimum height=0.6cm] (enc) {Encoder $\enc_{u}$};
    \node[left=0.3cm of enc, inner sep=0.8pt] (input) {$W_{u,j}$};
    \draw[->] (input) -- (enc.west);

    \node[draw, circle, right=1.35cm of enc, minimum size=0.1cm] (mult) {$\times$};
    \node[above=0.21cm of mult] (np) {\scriptsize $\sqrt{
    \uE_\mathrm{c}}$};
    

    \node[draw, circle, right=0.3cm of mult, minimum size=0.1cm] (channel_mult) {$\times$};
    \node[above=0.2cm of channel_mult] (channel_label) {\scriptsize $\tp{\vech_{u,j}}$};
    
    \node[draw, circle, right=0.3cm of channel_mult, minimum size=0.4cm] (adder) {$\sum$};
    \node[draw, circle, right=0.3cm of adder, minimum size=0.4cm] (adder2) {$\sum$};
    \node[above right=0.1cm of adder2.north east] (w_source_below) {};
    \node[above=0.5cm of w_source_below.center] (w_source) {\scriptsize $\mathbf{W}$};

    \node[below right=0.1cm of adder2.south east] (y_source_above) {};
    \node[below=2.2cm of y_source_above.center] (y_source) {
    };

    \draw (w_source) -- ++(0,-0.25) -- (w_source_below.center);
    \draw[->] (w_source_below.center) -- (adder2.north east);
    \draw (y_source.north) -- (y_source_above.center);
    \draw (y_source_above.center) -- (adder2.south east);
    
    \draw[->] (enc) -- node[above] {\scriptsize $\vecc_{u,W_{u,j}}$} (mult);
    \draw[->] (np) -- (mult); 
    \draw[->] (mult) -- (channel_mult);

    \draw[->] (channel_label) -- (channel_mult);
    \draw[->] (channel_mult) -- (adder);
    \draw[->] (adder) -- (adder2);

    \node[draw, rectangle, left = 2.7cm of y_source.north, minimum width=0.9cm, minimum height=0.6cm] (dec) {Decoder $\dec$}; 
    \node[left= 0.6cm of dec, minimum width=0.4cm, inner sep=1.8pt] (multiplicity) { $\displaystyle 
    \widehat{\vect}    
    $}; 

    \draw[->] (y_source.north) -- node[above] {\scriptsize $\mathbf{Y}$} (dec);
    \draw[->] (dec) -- (multiplicity);

    \node[above right = 0.4cm and -0.85cm of input, 
    red] (user_label) { active user $j$ in zone $u$};
    \node[above right=1.27cm and -0.9cm of input, 
    black] {zone $u$}; 

    \node[above = 1.5cm of adder.center] (upuserstart) {}; 
    \node[below = 0.95cm of adder.center] (downuserstart) {};

    \node[left = 4.5cm of upuserstart, anchor=west] (leftupuserstart) {From active user $1$ in zone $u$};
    \node[below right = -0.075cm and 0.9cm of leftupuserstart.west] (upvdots) {\scalebox{0.8}{$\vdots$}};
    \node[left = 4.5cm of downuserstart, anchor=west] (leftdownuserstart) {From active user $K_{\mathrm{a},u}$ in zone $u$};
    \node[above right = 0.09cm and 0.9cm of leftdownuserstart.west] (downvdots) {\scalebox{0.8}{$\vdots$}};

    \draw (upuserstart.center) -- (leftupuserstart.east);
    \draw (downuserstart.center) -- (leftdownuserstart.east);
    \draw[->] (upuserstart.center) -- (adder);
    \draw[->] (downuserstart.center) -- (adder);

    \node[above = 2.5cm of adder2.center] (upzonestart) {}; 
    \node[below = 1.9cm of adder2.center] (downzonestart) {}; 

    \draw[->] (upzonestart.center) -- (adder2);
    \draw[->] (downzonestart.center) -- (adder2);

    \node[left = 3cm of upzonestart, anchor=west] (leftupzonestart) {From zone $1$};
    \node[below right = -0.01cm and 0.9cm of leftupzonestart.west] (upvdots2) {\scalebox{0.8}{$\vdots$}};
    \node[left = 3cm of downzonestart, anchor=west] (leftdownzonestart) {From zone $\uU$};
    \node[above right = 0.1cm and 0.9cm of leftdownzonestart.west] (downvdots2) {\scalebox{0.8}{$\vdots$}};

    \draw (upzonestart.center) -- (leftupzonestart.east);
    \draw (downzonestart.center) -- (leftdownzonestart.east);

    \begin{scope}[on background layer]
        \node[draw=none, fill=gray!10, rounded corners, fit={
        (input) (enc) (channel_mult) (upvdots) (downvdots) (adder) (leftupuserstart) (leftdownuserstart)}, inner xsep=0.3cm, inner ysep=0.15cm, yshift=-0cm, xshift=-0.19cm] (zoneu) {};
        \node[draw, dashed, red, rounded corners, fit={
        (input) (enc) (user_label) (channel_mult) (channel_label) }, inner xsep=0.15cm, inner ysep=0.1cm, yshift=-0.1cm, xshift=-0.0cm] (zoneu) {};
    \end{scope}
\end{tikzpicture}
    \caption{Block diagram of the proposed TUMA framework with fading channel in a D-MIMO system.}
    \label{fig:block-diagram-squeezed}
\end{figure}
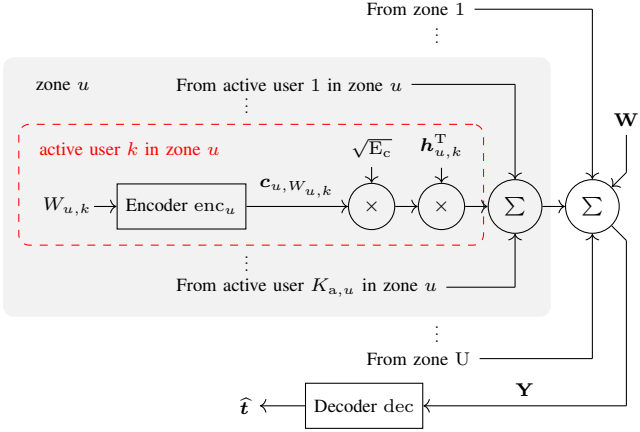

\subsection{Channel Model} \label{subsec:channel_model}

We index by $(u,j)$ the $j$th active user in zone~$u$. The channel between active user~$(u,j)$ and AP~$b$ is modeled as a quasi-static Rayleigh fading channel. Specifically, the channel coefficients are assumed to be independent across antennas, APs, and users, and to stay constant over the duration of a codeword. The channel vector $\vech_{u,j} \in \mathbb{C}^{\uF}$ between active user $(u,j)$ and all receive antennas is distributed as $\mathcal{C}\mathcal{N}(\veczero, \matSigma(\vecrho_{u,j}))$, where $\vecrho_{u,j}$ denotes the position of active user~$(u,j)$ and $\matSigma(\vecrho_{u,j}) = \diag(\gamma_1(\vecrho_{u,j}), \dots, \gamma_{\uB}(\vecrho_{u,j})) \otimes \mathbf{I}_{\uA}$ with $\gamma_b(\vecrho_{u,j})$ being the position-specific LSFC. When multiple users transmit the same codeword, their contributions are superimposed at the receiver. For a codeword~$\vecc_{u,m}$, the \emph{effective} channel vector is given by
\begin{equation}
\vecx_{u,m} = 
    \begin{cases}
    \sum_{j\in[K_{\mathrm{a},u}] \sothat W_{u,j}=m} \vech_{u,j} & \text{if } k_{u,m}>0, \\
    \veczero & \text{if } k_{u,m}=0.
    \end{cases} \label{expression_x} 
\end{equation}
Collecting the effective channel vectors for all codewords in zone $u$, we define the effective channel matrix as $\mathbf{X}_u = \tp{[\vecx_{u,1}, \dots, \vecx_{u,{\uM}}]} \in \mathbb{C}^{{\uM} \times \uF}$. The aggregated received signal across all APs is then given by 
\begin{equation} \label{eqn-mainreceivedsignal}
\mathbf{Y} = \sqrt{\uE_\mathrm{c}}  \sum_{u=1}^{\uU} \mathbf{C}_u \mathbf{X}_u + \mathbf{W} \in \mathbb{C}^{\uN_\mathrm{c} \times \uF},
\end{equation}
where $\mathbf{W} \sim_{\text{i.i.d.}} \mathcal{C}\mathcal{N}(0,\sigma_w^2)$ is the AWGN. The per-codeword average transmit energy is denoted by $\uE_\mathrm{c}$, and the transmit signal to noise ratio (SNR) is defined as $\text{SNR}_{\text{tx}} = \uE_\mathrm{c}/ (\uN_\mathrm{c}\sigma_w^2)$.

\subsection{Decoder} \label{Sec:SystemModel-Decoding}

The decoder estimates the message type by using a decoding function defined as $\dec \sothat \mathbb{C}^{\uN_{\mathrm{c}} \times \uF} \rightarrow \setP([{\uM}])$. Given $\mathbf{Y}$ and $\mathbf{C}$, the decoder first estimates the per-zone multiplicities, $\widehat{\veck}_u = \tp{[\widehat{k}_{u,1}, \dots, \widehat{k}_{u,{\uM}}]}$, $u=1,\dots,\uU$, and then computes the global multiplicity vector, $\widehat{\veck} = \sum_{u=1}^\uU \widehat{\veck}_u$.  
Finally, the type is estimated as $\widehat{\vect} = \widehat{\veck} / \lVert \widehat{\veck} \rVert_1$. The performance of type estimation is evaluated using the average TV distance between the type of the transmitted messages and its estimate, defined as
\begin{equation}
\dTV = \frac{1}{2} \Exop \left[  \sum_{m=1}^{{\uM}} | t_{m} - \widehat{t}_{m}| \right], \label{eq:tv_dist_expression}
\end{equation}
where the expectation is over the randomness of messages, user positions, small-scale fading, and additive noise.

\section{Proposed TUMA Decoder} \label{sec-proptumadec}

For the TUMA framework just introduced, we propose a decoder that employs the multisource AMP algorithm developed in~\cite{cakmak_2025_journal}. 
AMP is an iterative algorithm for estimating unknown vectors in large linear systems. In each AMP iteration, a linear residual update is performed, followed by a nonlinear Bayesian denoising operation, together with an Onsager correction that mitigates the correlations created across iterations. This ensures that the effective noise remains Gaussian, which enables accurate asymptotic performance prediction through state evolution~\cite{Bayati_2011}. 
In the multisource AMP framework, the term \emph{multisource} refers to the partitioning of the global codebook into multiple subcodebooks, each treated as a separate signal source. Users are assigned to these sources by selecting the subcodebook corresponding to their LSFCs. The multisource AMP algorithm then processes these sources in parallel within the same iterative procedure. This differs from classical AMP, which operates on a single unpartitioned codebook. The analysis in~\cite{cakmak_2025_journal} establishes that, under Gaussian codebooks, the behavior of multisource AMP in a D-MIMO architecture is accurately predicted by state evolution in the limit  $\uN_\mathrm{c} \rightarrow \infty$ with $\uM/\uN_\mathrm{c}$ fixed.

In the TUMA setting, message collisions introduce additional uncertainty: the number of users contributing to each codeword, and their positions within the zone, are both unknown. We account for this by modifying the Bayesian prior so that each unknown effective channel vector $\vecx_{u,m}$ is modeled as a weighted mixture over all possible multiplicities and user positions within zone~$u$. Based on this prior, we derive a multiplicity-aware denoiser that computes the posterior mean of the effective channel $\vecx_{u,m}$ for each message~$m$. We also modify the Onsager correction term to ensure consistency with the new denoiser. After the AMP iterations, we estimate the message multiplicities by using their posterior probabilities, computed in the denoising step. The details are provided below. 

Throughout, we assume that the receiver has access to the received signal~$\mathbf{Y}$, the codebook~$\mathbf{C}$, the transmit energy~$\uE_\mathrm{c}$, the LSFC model~$\gamma_b(\cdot)$, and the total number of users~$\uK$. However, the decoder treats the number of active users in each zone~$\{K_{\mathrm{a},u}\}_{u\in[\uU]}$, as well as their positions, as unknown random variables. Furthermore, no instantaneous CSI is available anywhere in the network. In the following, we introduce a centralized decoder and discuss approximations for its efficient implementation.
We then propose a distributed version of this decoder, which is inspired by dAMP~\cite{bai_larsson_damp}, which was originally designed for activity detection with perfect per-user LSFC information. Its adaptation to our setting, which involves unknown user positions and message multiplicities, requires nontrivial modifications.

\subsection{Centralized Decoder} \label{SubSec-CentDec}
The centralized decoder employs the multisource AMP algorithm~\cite{cakmak_2025_journal} to iteratively process the received signal and extract necessary information for multiplicity estimation.

\subsubsection{Multisource AMP} \label{subsec:multi_amp}
We recap the multisource AMP algorithm from~\cite{cakmak_2025_journal} and refer the reader to that work for additional details including its high-dimensional analysis. The algorithm performs $\uT_\text{AMP}$ iterations, where at iteration $t$, $\mathbf{X}_u^{(t)}$ denotes the estimate of the effective channel matrix for zone $u$, and $\mathbf{Z}^{(t)}$ denotes the residual noise. These matrices are initialized as $\mathbf{X}_u^{(0)} = \veczero$ and $\mathbf{Z}^{(0)}=\mathbf{Y}$, and updated as follows
\begin{subequations} \label{amp_decoder}
    \begin{align}
        &\mathbf{R}_u^{(t)} = \mathbf{C}_u^{\uH} \mathbf{Z}^{(t-1)} + \sqrt{\uE_\mathrm{c}}
        \mathbf{X}_u^{(t-1)} \in \mathbb{C}^{\uM \times \uF}, \\
        &\mathbf{X}_u^{(t)} = \eta_{u,t}(\mathbf{R}_u^{(t)}) \in \mathbb{C}^{\uM \times \uF}, \label{amp_decoder_row2} \\
        &\mathbf{\Gamma}_u^{(t)} = \mathbf{C}_u \mathbf{X}_u^{(t)} - \frac{{\uM}}{\uN_{\mathrm{c}}} \mathbf{Z}^{(t-1)} \mathbf{Q}_u^{(t)} \in \mathbb{C}^{\uN_\mathrm{c} \times \uF}, \label{amp_decoder_row3} \\
        &\mathbf{Z}^{(t)} = \mathbf{Y} - \sqrt{\uE_\mathrm{c}} 
        \sum_{u=1}^{\uU} \mathbf{\Gamma}_u^{(t)} \in \mathbb{C}^{\uN_\mathrm{c} \times \uF}.
    \end{align}
\end{subequations}
Here, $\mathbf{Q}_u^{(t)}$ is the so-called Onsager term~\cite{cakmak_2025_journal}, which will be described in Section~\ref{Subsec:onsager}. The denoiser $\eta_{u,t}(\cdot)$ acts row-wise on $\mathbf{R}_u^{(t)}$. The following theorem establishes that the multisource AMP
state-evolution characterization in~\cite{cakmak_2025_journal}
remains valid under the message-collision model considered in TUMA.

\begin{theorem}[State evolution of multisource AMP under message collision]
\label{thm:tuma_se}
Consider the TUMA model in~\eqref{eqn-mainreceivedsignal}  and the AMP recursion in~\eqref{amp_decoder}. Assume that
$\mathbf C_u\in\mathbb C^{\uN_{\mathrm c}\times\uM}$ has i.i.d.
$\mathcal{CN}(0,1/\uN_{\mathrm c})$ entries,\footnote{This assumption does not satisfy the deterministic power constraint $\lVert \vecc_{u,m} \rVert_2^2=1$ imposed in Section~\ref{subsec:channel_model}. 
However, in our simulations, we have not observed performance differences between Gaussian and normalized Gaussian codebooks under the simulation settings of Section~\ref{SubSec:SimSetup}. Therefore, in the numerical results, we employ a normalized Gaussian codebook, obtained by normalizing each Gaussian codeword to unit norm.} that
$\alpha=\uM/\uN_{\mathrm c}$, $\uU$, $\uF$, and $\uE_{\mathrm c}$ remain fixed as
$\uN_{\mathrm c},\uM\to\infty$, and that the denoisers
$\eta_{u,t}(\cdot)$ are differentiable and Lipschitz continuous for all $u$ and $t$.

Let $\vecx_u$ denote a generic row of $\mathbf X_u$. 
Assume further that the message multiplicities satisfy
$k_{u,m}\in\{0,\ldots,\uK_{\text{max}}\}$ for all $(u,m)$
and that, conditioned on the multiplicities and user positions, the
channel vectors appearing in~\eqref{expression_x} are independent
zero-mean Gaussian vectors with uniformly bounded covariance matrices.

Then, for every fixed AMP iteration $t$, each row of $\mathbf R_u^{(t)}$ converges, in the high-dimensional sense as described in~\cite[Thm.~1]{cakmak_2025_journal}, to
\begin{equation}
\vecr_u^{(t)}
=
\sqrt{\uE_c}\,\vecx_u
+
\vecvarphi_u^{(t)}, \label{decouple_model}
\end{equation}
where the effective noise 
$\vecvarphi_u^{(t)}
\sim
\mathcal{CN}(\veczero,\mathbf T^{(t)})$
is independent of~$\vecx_u$. Moreover, its covariance matrix is given by $\mathbf T^{(t)}=\mathbf T^{(t,t)}$, where 
\begin{equation}
\mathbf{T}^{(t,s)}
=
\sigma_w^2 \mathbf I_{\uF}
+ 
\sum_{u=1}^{\uU} \mathbf{T}_u^{(t,s)},
\label{eq:T_ts}
\end{equation}
with
\begin{equation}
\mathbf{T}_u^{(t,s)}
=
\alpha \uE_{\mathrm{c}} \,
\mathbb{E} \mathopen{} \left[
(\vecx_u - \vecx_u^{(t)})
(\vecx_u - \vecx_u^{(s)})^{\uH}
\right]
\in \mathbb{C}^{\uF \times \uF}.
\label{eq:Tu_ts}
\end{equation}
Here, \(\vecx_u^{(t)} \triangleq \eta_{u,t}(\sqrt{\uE_\mathrm{c}} \, \vecx_u + \vecvarphi_u^{(t)}) \).

\end{theorem}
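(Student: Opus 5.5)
The plan is to reduce the statement to the state-evolution result \cite[Thm.~1]{cakmak_2025_journal}, which is in turn an instance of the general matrix-valued (multi-column) AMP state evolution for i.i.d.\ Gaussian sensing matrices with row-wise Lipschitz denoisers. That result already covers the multisource structure with $\uU$ subcodebooks, i.i.d.\ $\mathcal{CN}(0,1/\uN_{\mathrm c})$ entries, $\alpha=\uM/\uN_{\mathrm c}$ fixed, and row-wise denoisers $\eta_{u,t}$ combined with the Onsager term $\mathbf Q_u^{(t)}$. The only point where TUMA departs from \cite{cakmak_2025_journal} is the distribution of the rows of $\mathbf X_u$. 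There, a row is either zero or a single user's channel; here, by \eqref{expression_x}, it is a sum of $k_{u,m}$ independent Gaussian vectors. The proof therefore consists of checking that the assumptions of \cite[Thm.~1]{cakmak_2025_journal} on the signal matrix still hold under this collision model, and that the Onsager term used in \eqref{amp_decoder} is the one that theorem prescribes, namely the averaged Jacobian $\mathbf Q_u^{(t)}=\frac1{\uM}\sum_m \partial\eta_{u,t}/\partial\vecr$ evaluated row-wise.

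\emph{Step 1 (signal assumptions).} In the general AMP theory, the signal rows must have an empirical distribution that converges, in the pseudo-Lipschitz-of-order-2 (Wasserstein-2) sense, to the law of a random vector $\vecx_u$ with finite second moment. Conditioned on $k_{u,m}$ and the positions, $\vecx_{u,m}\sim\mathcal{CN}(\veczero,\sum_j\matSigma(\vecrho_{u,j}))$. Since $k_{u,m}\le\uK_{\max}$ and the covariances are uniformly bounded, $\mathbb E\lVert\vecx_{u,m}\rVert^2\le \uK_{\max}\,\uF\,\sup\gamma_b$ uniformly in $m$, and every higher moment is bounded in the same way. Independence across codewords, given the multiplicities and positions, then lets a strong law of large numbers for triangular arrays with bounded fourth moments yield almost-sure convergence of the empirical row distribution. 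The limit is the mixture law of $\vecx_u$, i.e., a mixture over multiplicity and positions of Gaussians. Uniform integrability makes this convergence hold in $W_2$. The noise $\mathbf W$ is i.i.d.\ Gaussian, which matches the theorem exactly.

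\emph{Step 2 (denoiser and recursion).} The denoisers are assumed differentiable and Lipschitz, which is precisely the hypothesis in \cite{cakmak_2025_journal}. The recursion \eqref{amp_decoder} has the same algebraic form, so the conditioning technique of Bolthausen and Bayati--Montanari applies verbatim by induction on $t$, as in \cite[Thm.~1]{cakmak_2025_journal}. At each iteration, $\mathbf R_u^{(t)}-\sqrt{\uE_{\mathrm c}}\mathbf X_u$ behaves as a Gaussian matrix independent of $\mathbf X_u$, with row covariance $\mathbf T^{(t)}$. The covariance recursion \eqref{eq:T_ts}--\eqref{eq:Tu_ts} comes from the limiting normalized inner products $\frac{1}{\uN_{\mathrm c}}(\mathbf Z^{(t)})^{\uH}\mathbf Z^{(s)}$. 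Each zone contributes $\alpha\uE_{\mathrm c}$ times the error correlation of its own estimates, because the codebooks $\mathbf C_u$ are independent across zones and their cross terms vanish. The noise contributes $\sigma_w^2\mathbf I_{\uF}$.

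The main obstacle is Step 1. The collision model makes the rows of $\mathbf X_u$ only conditionally independent, and the multiplicity vector is not i.i.d.\ across $m$, since $\lVert\veck_u\rVert_1=K_{\mathrm a,u}$. I would handle this by conditioning on $\veck_u$ and the positions, and then assuming, or arguing from the sensing model, that their empirical distribution converges as $\uM\to\infty$. Because the state-evolution statement only involves the limiting empirical law, the dependence induced by the sum constraint does not matter, as long as the empirical law converges and second moments stay bounded, which $\uK_{\max}$ guarantees.
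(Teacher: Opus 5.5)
Your proposal is correct and takes the same route as the paper: it reduces the statement to \cite[Thm.~1]{cakmak_2025_journal} by checking three things. First, the recursion has the same multisource structure, with the factor $\sqrt{\uE_{\mathrm c}}$ absorbed. Second, conditioned on $k_{u,m}\le\uK_{\text{max}}$ and the user positions, each row of $\mathbf X_u$ is a finite sum of independent Gaussians, so the row law is a finite Gaussian mixture with bounded moments of all orders. Third, the denoisers are differentiable and Lipschitz. Your additional points on empirical-law convergence and on the dependence induced by $\lVert\veck_u\rVert_1=K_{\mathrm{a},u}$ go beyond the paper, which works directly with the distribution of a generic row and leaves these points implicit.
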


\begin{proof}

The result follows by specializing the multisource AMP high-dimensional analysis in~\cite[Def.~1, Assumption~1, Thm.~1]{cakmak_2025_journal} to the distribution of a generic row of $\mathbf{X}_u$ induced by the TUMA multiplicity model.

We start by noting that the AMP recursion in~\eqref{amp_decoder} has the same multisource AMP structure as the recursion analyzed in~\cite{cakmak_2025_journal}, up to the deterministic signal scaling by $\sqrt{\uE_{\mathrm c}}$. This scaling is absorbed into the the distribution of $\vecx_u$ and the state-evolution covariance recursion through the factor $\uE_{\mathrm c}$ in~\eqref{eq:Tu_ts}.

Second, the distribution of $\vecx_u$ satisfies the bounded-moment condition required in~\cite[Assumption~1]{cakmak_2025_journal}. Indeed, it follows from~\eqref{expression_x} that conditioned on a realization of the
multiplicity~$k_{u,m}$ and the corresponding user positions,
a generic row $\vecx_u$ is a finite sum of independent Gaussian
channel vectors and is therefore Gaussian.
Since $k_{u,m}\le \uK_{\text{max}}$, the row distribution is a finite
Gaussian mixture with bounded moments of all orders.

Finally, the denoisers are assumed to be differentiable and Lipschitz continuous, as required by~\cite[Thm.~1]{cakmak_2025_journal}. Therefore, all assumptions of the multisource AMP state-evolution theorem~\cite[Thm.~1]{cakmak_2025_journal} are satisfied. This implies that the Gaussian decoupling in~\eqref{decouple_model} and the covariance recursion in~\eqref{eq:Tu_ts}--\eqref{eq:T_ts} hold also in our TUMA setup.
\end{proof}

Theorem~\ref{thm:tuma_se} implies that each row of $\mathbf R_u^{(t)}$ behaves asymptotically as an observation from the Gaussian model~\eqref{decouple_model}  with effective noise covariance $\mathbf T^{(t)}=\mathbf T^{(t,t)}$. Accordingly, we pick the denoiser $\eta_{u,t}(\cdot)$ as the Bayesian posterior mean estimator (PME) associated with the decoupled model in~\eqref{decouple_model}.

Combining the state evolution equations~\eqref{eq:Tu_ts}--\eqref{eq:T_ts} with the high-dimensional representation of the AMP residual in\cite[Eq.~(18)]{cakmak_2025_journal}, one can show that, in the large-system limit, using the high-dimensional equivalence introduced in~\cite{cakmak_2025_journal} together with the Borel-Cantelli lemma~\cite[Sec.~4]{borelcantelli_book}, 
\begin{equation}
\frac{1}{\uN_{\mathrm c}}
(\mathbf Z^{(t-1)})^{\uH}
\mathbf Z^{(t-1)}
\xrightarrow[]{\mathrm{a.s.}}
\mathbf T^{(t)},
\label{eq:residual_cov_converges}
\end{equation}
i.e., the empirical covariance of the residual $\mathbf{Z}^{(t-1)}$ converges almost surely to the effective noise covariance $\mathbf{T}^{(t)}$. 
Because the block diagonal structure in the covariance matrix is preserved by the state evolution recursion~\cite[Sec.~III]{cakmak_2025_journal}, we have that $\mathbf{T}^{(t)}=\diag\!\big(\tau_1^{(t)},\dots,\tau_{\uB}^{(t)}\big)
\otimes \mathbf{I}_{\uA}$, where $\tau_b^{(t)}$ denotes the effective noise variance associated with AP~$b$ at iteration~$t$. 
In practice, we estimate these variances from the residual as
\begin{equation}
    \tau_b^{(t)} =
    \frac{1}{\uA \uN_\mathrm{c}}
    \sum_{a=1}^{\uA}
    \Re\!\left\{
    \big[
      (\mathbf{Z}^{(t-1)})^{\uH}\mathbf{Z}^{(t-1)}
    \big]_{(b-1)\uA + a,\,(b-1)\uA + a}
    \right\},
    \label{residual_expression}
\end{equation}
which is a sample-average estimate of the diagonal entries of $\mathbf{T}^{(t)}$ over the antennas of AP~$b$. The real-part operator $\Re\{\cdot\}$ is included as a practical implementation detail to compensate for  numerical errors.

\subsubsection{Prior Selection} \label{subsec:prior} To derive the PME for the model in~\eqref{decouple_model}, we need a prior on $\vecx_{u,m}$. This in turn requires one to specify a prior on the message multiplicities, i.e., a probability distribution $p(k_{u,m}=k)$, for $k=0,\dots,\uK$. This prior is application-dependent, as it is influenced by factors such as the underlying user activity and message selection mechanisms. We provide one such prior for a multi-target localization application in Section~\ref{sec:MTL-TUMA}.

We assume that the active users in each zone~$u$ are independently and uniformly distributed over the zone. Given $k_{u,m}=k$, the positions of users transmitting the $m$th message in zone $u$ are denoted by $\vecrho_{u,m,1:k} = \tp{[\tp{\vecrho_{u,m,1}},\ldots,\tp{\vecrho_{u,m,k}}]}$, where each $\vecrho_{u,m,i}$ is a point in $\setD_u$.\footnote{We treat all positions as 2D coordinates for simplicity. An extension to 3D coordinates is straightforward.} These positions follow the distribution $p(\vecrho_{u,m,1:k_{u,m}} \mid k_{u,m}=k) = 1/|\setD_u|^k$, where $|\setD_u|$ denotes the area of the region $\setD_u$.

\subsubsection{Denoiser} \label{subsec:denoiser} The Bayesian PME of $\vecx_{u,m}$ given $\mathbf{R}_u^{(t)}$ is derived using the large-system approximation~\eqref{decouple_model}. For simplicity, the iteration index~$(t)$ is omitted in the following equations. We can express the PME denoiser as
\begin{equation}
    \eta_u (\vecr_{u,m}) = \sum_{k=1}^{\uK} \Exop[\vecx_{u,m} | \vecr_{u,m}, k_{u,m}=k] \, p(k_{u,m}=k \mid \vecr_{u,m}). \label{eq:start_of_posterior_denoiser_b}
\end{equation}
We next compute both terms on the right-hand side (RHS) of~\eqref{eq:start_of_posterior_denoiser_b}. For notational simplicity, we omit the subscript $m$ and use $k_u$, $\vecrho_{u,1:k_u}$, $\vecr_u$, and $\vecx_u$ to denote $k_{u,m}$, $\vecrho_{u,m,1:k_{u,m}}$, $\vecr_{u,m}$, and $\vecx_{u,m}$, respectively. Using Bayes' theorem, we can express the second factor on the RHS of~\eqref{eq:start_of_posterior_denoiser_b}~as
\begin{equation} \label{eqn:posterior_mults}
    p(k_{u}=k \mid \vecr_{u}) = \frac{p(\vecr_{u} \mid k_{u}=k) \, p(k_{u}=k)}{\sum_{\ell=0}^{\uK} p(\vecr_{u} \mid k_{u}=\ell) \, p(k_{u}=\ell)}.
\end{equation}
The likelihood $p(\vecr_{u} \mid k_{u}=k)$ is given by 
\begin{subequations}
    \begin{align}
    &p(\vecr_u \mid k_u=k) \notag \\
    &=\int_{\setD_u^{k}} p(\vecr_u \mid \vecrho_{u,1:k}, k_u=k) \, p(\vecrho_{u,1:k} \mid k_u=k)\, \mathrm{d}\vecrho_{u,1:k} \label{eqn-deriv-p_ruku_step1}\\
    &=\frac{1}{|\setD_u|^{k}} \int_{\setD_u^{k}} p(\vecr_u \mid \vecrho_{u,1:k})\, \mathrm{d}\vecrho_{u,1:k}, \label{eqn:likelihood_r_given_k}
    \end{align}
\end{subequations}
where $\setD_u^{k} = \setD_u \times \cdots \times \setD_u $. In \eqref{eqn-deriv-p_ruku_step1}, we used the law of total probability, and in~\eqref{eqn:likelihood_r_given_k}, we used the Markov property $k_{u} \leftrightarrow \vecrho_{u,1:k_{u}} \leftrightarrow \vecx_{u} \leftrightarrow \vecr_{u}$ and the assumption that $p(\vecrho_{u,1:k_u} \mid k_u=k)= 1/|\setD_u|^{k}$. It follows from (\ref{expression_x}) and \eqref{decouple_model} that, in the large-system limit and for the case $k_u=k$, the likelihood $p(\vecr_{u} \mid \vecrho_{u,1:k})$ is given by
\begin{equation}   \label{eqn:explicit_likelihood_expression}
p(\vecr_{u} \mid \vecrho_{u,1:k}) = \textstyle \mathcal{C}\mathcal{N}(\vecr_{u}; \veczero, \mathbf{T} + \uE_\mathrm{c} 
\sum_{i=1}^{k} \matSigma(\vecrho_{u,i})).
\end{equation}
Next, the conditional mean $\Exop[\vecx_{u} \mid \vecr_{u}, k_{u}=k]$ in \eqref{eq:start_of_posterior_denoiser_b} can be expressed as
\begin{align} 
    \Exop[\vecx_{u} &\mid \vecr_{u}, k_{u}=k] \notag \\
    & = \int_{\setD_u^{k}} \Exop[\vecx_{u} \mid \vecr_{u}, \vecrho_{u,1:k}] \, p(\vecrho_{u,1:k} \mid \vecr_{u})\, \mathrm{d}\vecrho_{u,1:k}. \label{eqn-bayesian-pme-17}
\end{align}
Note now that \eqref{decouple_model}, together with the Gaussianity of $\vecx_u$, imply that~\cite[Sec.~12.5]{mmse_est} 
\begin{align} \label{eqn:mmse_est}
    &\Exop[\vecx_{u} \mid \vecr_{u}, \vecrho_{u,1:k}] \notag \\
    &= \textstyle  ( 
    \sqrt{\uE_\mathrm{c}} 
    \sum_{i=1}^{k} \matSigma(\vecrho_{u,i}) ) (\mathbf{T} + 
     \uE_\mathrm{c}
    \sum_{i=1}^{k} \matSigma(\vecrho_{u,i}))^{-1} \vecr_{u} .
\end{align}
Finally, we evaluate the posterior of the user positions in \eqref{eqn-bayesian-pme-17}  as
\begin{subequations} \label{eqn-bayesian-pme-19-main}
\begin{align}
p&(\vecrho_{u, 1:k} \mid \vecr_u) =p(\vecrho_{u,1:k_u} \mid \vecr_u, k_u=k) \notag \\ 
&= \frac{p(\vecr_u\mid \vecrho_{u, 1:k_u},k_u=k)p(\vecrho_{u,1:k_u}\mid k_u=k)}{p(\vecr_u\mid k_u=k)} \label{eqn-deriv-posteriorpositions_step1} \\
&= \frac{p(\vecr_u\mid \vecrho_{u,1:k})p(\vecrho_{u,1:k})}{\int_{\setD_u^{k}}p(\vecr_u\mid \vecrho'_{u,1:k}) p(\vecrho'_{u,1:k}\mid k_u=k) \, \mathrm{d}\vecrho'_{u,1:k}} \label{eqn-deriv-posteriorpositions_step2} \\
&= \frac{p(\vecr_u\mid \vecrho_{u,1:k})/|\setD_u|^{k}}{\int_{\setD_u^{k}}(p(\vecr_u\mid \vecrho'_{u,1:k}) /|\setD_u|^{k}) \, \mathrm{d}\vecrho'_{u,1:k}} \label{eqn-deriv-posteriorpositions_step3} \\
&= \frac{p(\vecr_u\mid \vecrho_{u,1:k})}{\int_{\setD_u^{k}}p(\vecr_u\mid \vecrho'_{u,1:k}) \, \mathrm{d}\vecrho'_{u,1:k}}. \label{eqn-bayesian-pme-19}
\end{align}
\end{subequations}
Here, \eqref{eqn-deriv-posteriorpositions_step1} follows from the Bayes' theorem, \eqref{eqn-deriv-posteriorpositions_step2} from the law of total probability and the Markov property $k_{u} \leftrightarrow \vecrho_{u,1:k_{u}} \leftrightarrow \vecx_{u} \leftrightarrow \vecr_{u}$, and \eqref{eqn-deriv-posteriorpositions_step3} from the assumption of a uniform prior.

\subsubsection{Onsager Correction} \label{Subsec:onsager}
In multisource AMP~\cite{cakmak_2025_journal}, the Onsager term $\mathbf{Q}_u^{(t)} \in \mathbb{C}^{\uF \times \uF}$ in (\ref{amp_decoder_row3}) is given by
\begin{align}
[\mathbf{Q}_u^{(t)}]_{a,b} = \frac{1}{{\uM}} \sum_{m=1}^{{\uM}} \frac{\partial [\eta_{u,t}(\vecr_{u,m}^{(t)})]_b}{\partial [\vecr_{u,m}^{(t)}]_a}. \label{onsager_expression}
\end{align}
A numerically computable expression for this term is provided in Appendix~\ref{app:onsager}.

\subsubsection{Multiplicity Estimation} \label{subsec:type_est} We perform maximum a posteriori decoding as
\begin{align} \label{type_estimation_mapml} 
\widehat{k}_{u,m}  
&= \argmax_{k \in \{0,1,\ldots,\uK\}} p(k_{u,m}=k\mid \vecr_{u,m}). 
\end{align} 
Then, we estimate the type $\widehat{\vect}$ as outlined in Section~\ref{Sec:SystemModel-Decoding}.

\subsection{Approximations for Efficient Implementation}
The PME denoiser involves multi-dimensional integrals in \eqref{eqn:likelihood_r_given_k}, \eqref{eqn-bayesian-pme-17}, and \eqref{eqn-bayesian-pme-19} whose evaluation becomes computationally prohibitive for large values of multiplicities. One could discretize the coverage area using a uniform discrete grid. However, the complexity of this approach still grows exponentially with the multiplicity. Instead, we adopt MC sampling. Specifically, we draw user positions independently from the uniform prior over $\setD_u$. Let $\{\vecrho^i_{u,1:k}\}_{i=1}^{\uN_{\text{MC}}}$ denote the MC samples of the positions of $k$ users. Using these samples, we approximate \eqref{eqn:posterior_mults} as 
\begin{equation}
    p(k_u=k \mid \vecr_u) \approx \frac{\sum_{i=1}^{\uN_{\text{MC}}} p(\vecr_u \mid \vecrho^i_{u,1:k}) p(k_u=k)}{\sum_{i=1}^{\uN_{\text{MC}}} \sum_{\ell=0}^{\uK} p(\vecr_u \mid \vecrho^i_{u,1:\ell}) p(k_u=\ell)}. \label{eqn-MCsampling2}
\end{equation}
and \eqref{eqn-bayesian-pme-17} as 
\begin{equation}
    \Exop[\vecx_u \mid \vecr_u, k_u=k] \approx \frac{\sum_{i=1}^{\uN_{\text{MC}}} \Exop[\vecx_u \mid \vecr_u, \vecrho^i_{u,1:k}] p(\vecr_u \mid \vecrho^i_{u,1:k})}{\sum_{i=1}^{\uN_{\text{MC}}} p(\vecr_u \mid \vecrho^i_{u,1:k})}. \label{eqn-MCsampling1}
\end{equation}
To further reduce complexity, we limit the computation to a maximum multiplicity $\uK_\text{max}$, i.e., we replace $\uK$ in equations \eqref{eq:start_of_posterior_denoiser_b}, \eqref{eqn:posterior_mults}, \eqref{type_estimation_mapml}, and \eqref{eqn-MCsampling1} by $\uK_\text{max}$. The choice of $\uK_\text{max}$ can be guided by the prior $p(k_{u,m}=k)$, to ensure that only statistically significant multiplicities are considered. We summarize the proposed centralized decoder with MC sampling-based approximation in Algorithm~\ref{CentralizedDecoder}.

\begin{algorithm}[t!]
\small
\caption{Centralized Decoder with Monte Carlo Sampling} 
\textbf{Hyperparameters:} Number of MC samples~$\uN_\mathrm{MC}$, 
number of AMP iterations~$\uT_\text{AMP}$, maximum multiplicity~$\uK_{\text{max}}$ \\
\textbf{Inputs:} Received signal $\mathbf{Y}$, codebook $\mathbf{C}$, transmit energy~$\uE_\mathrm{c}$, sampled positions $\{\vecrho^i_{u,1:k}\}_{i=1}^{\uN_{\text{MC}}}$ for all $k \in [\uK_\text{max}]$, $u \in [\uU]$ \\
\textbf{Output:} Estimated type vector $\widehat{\vect}$ \\
\textbf{Initialization:} $\mathbf{Z}^{(0)} = \mathbf{Y}$, $\mathbf{X}_u^{(0)} = \veczero$, for all $u \in [\uU]$
\begin{algorithmic}[1]
    \vspace{0.1cm}
    \Statex \textbf{1. AMP for Channel Estimation:}
    \State Precompute  $\{\sum_{j=1}^{k}\matSigma(\vecrho^i_{u,j})\}_{i=1}^{\uN_{\text{MC}}}$ for $k \in [\uK_\text{max}]$, $u \in [\uU]$
    \For{$t \gets 1$ to $\uT_\text{AMP}$}
        \For{$u \gets 1$ to $\uU$}  
            \vspace{0.1cm}
            \Statex \hspace{0.85cm} \textit{Effective observation:}
            \State $\mathbf{R}_u^{(t)} \gets \mathbf{C}_u^{\uH} \mathbf{Z}^{(t-1)} + \sqrt{\uE_\mathrm{c}} \mathbf{X}_u^{(t-1)}$
            \vspace{0.1cm}
            \Statex \hspace{0.85cm} \textit{Bayesian denoising:}
            \State Compute covariance matrix $\mathbf{T}^{(t)}$ as in \eqref{residual_expression} 
            \State $\mathbf{X}_u^{(t)} \gets \eta_{u,t}(\mathbf{R}_u^{(t)})$ using \eqref{eq:start_of_posterior_denoiser_b}--\eqref{eqn-bayesian-pme-19-main}, \eqref{eqn-MCsampling2} and \eqref{eqn-MCsampling1} 
            \vspace{0.1cm}
            \Statex \hspace{0.85cm} \textit{Onsager correction:}
            \State Compute $\mathbf{Q}_u^{(t)}$ as in (\ref{onsager_expression}) and Appendix~\ref{app:onsager}
            \State $\mathbf{\Gamma}_u^{(t)} \gets \mathbf{C}_u \mathbf{X}_u^{(t)} - \frac{{\uM}}{\uN_\mathrm{c}} \mathbf{Z}^{(t-1)} \mathbf{Q}_u^{(t)}$
        \EndFor
        \vspace{0.1cm}
        \Statex \hspace{0.4cm} \textit{Residual update:}
        \State $\mathbf{Z}^{(t)} \gets \mathbf{Y} - \sqrt{\uE_\mathrm{c}} \sum_{u=1}^{\uU} \mathbf{\Gamma}_u^{(t)}$
    \EndFor
    \vspace{0.1cm}
    \Statex \textbf{2. Type Estimation:}
    \State Estimate message multiplicities $\{\widehat{\veck}_{u}\}_{u\in [\uU]}$ as in (\ref{type_estimation_mapml}) using $p(k_{u,m}=k \mid \vecr_{u,m})$ computed in line 6 with \eqref{eqn-MCsampling2}
    \State $\widehat{\vect} \gets \sum_{u=1}^\uU \widehat{\veck}_u / \lVert \sum_{u=1}^\uU \widehat{\veck}_u\rVert_1$
\end{algorithmic}
\label{CentralizedDecoder}
\end{algorithm}

\subsection{Distributed Decoder}
To improve scalability in D-MIMO systems~\cite{cf_mimo_book}, we propose a distributed decoder inspired by dAMP~\cite{bai_larsson_damp}. Rather than assuming that the received signal is entirely processed at the CPU, we consider a scenario in which each AP locally processes its received signal and transmits the likelihoods of each codeword to the CPU. Then, the likelihoods are aggregated as (see Appendix~\ref{app:dist_AMP})
\begin{equation} p(\vecr_{u,m} \mid \vecrho_{u,1:k}) = \prod_{b=1}^{\uB} p_b(\vecr_{b,u,m} \mid \vecrho_{u,1:k}), \end{equation} where $p_b(\vecr_{b,u,m} \mid \vecrho_{u,1:k})$ is the local likelihood computed at AP $b$ with MC sampling as in (\ref{eqn-MCsampling2}). The posterior probability is then computed as 
\begin{align} 
&p(k_{u,m}=k \mid \vecr_{u,m}) \notag\\ & \qquad  = \frac{p(k_{u,m}=k) \prod_{b=1}^{\uB} p_b(\vecr_{b,u,m} \mid \vecrho_{u,1:k})}{\sum_{\ell=0}^{\uK} p(k_{u,m}=\ell) \prod_{b=1}^{\uB} p_b(\vecr_{b,u,m} \mid \vecrho_{u,1:\ell})}. \label{eq_dist_likelihood}
\end{align}
as detailed in Appendix~\ref{app:dist_AMP}. 
This design improves scalability by reducing the CPU's workload through local processing at APs. As before, we can replace $\uK$ with $\uK_\text{max}$ in \eqref{eq_dist_likelihood}. The CPU then performs posterior computation and type estimation following the steps already detailed in Section~\ref{SubSec-CentDec}.

\subsection{Complexity Analysis}
The per-iteration complexity of the centralized decoder is dominated by two operations: (i) the linear AMP updates involve multiplications by $\mathbf{C}_u$ and $\mathbf{C}_u^{\mathrm{H}}$ across all zones, which scale as $O(\adjustedbar{\uM}\cdot \uN_\mathrm{c})$; (ii) the PME denoising operation, which involves, for each of the $\adjustedbar{\uM}$ codewords, $\uN_{\text{MC}}$ MC samples for each of the at most $\uK$ multiplicity hypotheses and $\uF$ receive antennas. Because the posterior covariances are diagonal, each likelihood evaluation is $O(\uF)$, leading to $O(\adjustedbar{\uM}\cdot\uN_{\text{MC}}\cdot\uK\cdot\uF)$ operations.
Combining these terms, we obtain a complexity of order $O\big(\adjustedbar{\uM} \cdot (\uN_\mathrm{c} + \uN_{\text{MC}} \cdot \uK \cdot \uF)\big)$.
By limiting the analysis to a maximum multiplicity $\uK_{\text{max}}$, we can reduce the complexity to $O\big(\adjustedbar{\uM} \cdot (\uN_\mathrm{c} + \uN_{\text{MC}} \cdot \uK_{\text{max}} \cdot \uF)\big)$. 

In the distributed decoder, each AP operates on a subset of the data, reducing the per-device computational load. Specifically, the per-AP complexity is of order $O\big(\adjustedbar{\uM} \cdot (\uN_\mathrm{c} + \uN_{\text{MC}} \cdot \uK_{\text{max}} \cdot \uA)\big)$. After local processing, the CPU aggregates the likelihoods and performs type estimation with complexity $O\big(\adjustedbar{\uM} \cdot \uB\big)$. Since $\uA \ll \uF$ in typical D-MIMO systems, the distributed approach reduces significantly the computational cost.

\section{Multi-Target Localization with TUMA} \label{sec:MTL-TUMA}

We apply the TUMA framework proposed in Sections~\ref{sec-systmodel} and~\ref{sec-proptumadec} to a multi-target localization scenario with $\uK$ sensors and $\uT$ targets in the coverage area~$\setD$. The sensors, which act as users, aim to detect the targets and report their positions to the APs using TUMA. The target positions are denoted as $\setT = \{\vecp_1, \dots, \vecp_\uT\}\in\setD^\uT$, and the sensor positions as $\setS = \{\vecs_1,\dots,\vecs_\uK\} \in \setD^\uK$. Each sensor detects the targets using a probabilistic sensing model and quantizes the positions of the detected targets before transmission. Throughout, we assume that each sensor may report at most a single target position: if more than one target is detected, the sensor reports only the closest detected target. We now detail each step of the multi-target localization framework.

\subsection{Sensor Activation and Message Generation}

We define the sensing function for sensor $j$ in zone $u$ as $\sens_{u,j} : \setD^\uT \rightarrow \setD \cup \{\emptyset\}$. It operates as follows:
\begin{itemize}
    \item $\sens_{u,j}(\setT) = \emptyset$ indicates that the sensor $(u,j)$ does not detect any target and therefore remains inactive.    
    \item $\sens_{u,j}(\setT) = \vecp_{O_{u,j}}$ means that the sensor $(u,j)$ detects a target at position $\vecp_{O_{u,j}} \in \setD$, where $O_{u,j} \in [\uT]$ denotes the index of the detected target.
\end{itemize}

We denote by $K_{\mathrm{a},u}$ the number of active sensors in zone u, i.e., the number of sensors that detect a target. Without loss of generality, we assume that active sensors in zone $u$ occupy the first $K_{\mathrm{a},u}$ indices. That is, we reorder sensor indices so that $\sens_{u,j}(\setT) \neq \emptyset$, $\forall j \leq K_{\mathrm{a},u}$ and $\sens_{u,j}(\setT) = \emptyset$, $ \forall j > K_{\mathrm{a},u}$. This simplifies notation and allows us to directly index the active sensors in zone $u$ as $(u,j)$ with $j\in [K_{\mathrm{a},u}]$.

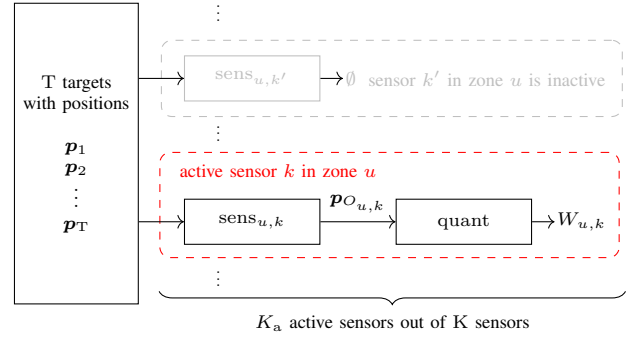
\begin{figure}[!t]
\centering
\begin{tikzpicture}[node distance=0.5cm and 0.7cm, font=\scriptsize]

    \node[rectangle, draw, black, minimum height = 4cm, minimum width = 1cm, node distance=3.7cm, align=center] (targets) at (0,0) {$\uT$ targets \\ with positions \\ \\ $\vecp_1$\\$\vecp_2$ \\ $ \vdots$ \\ $\vecp_{{\uT}}$};

    \node[rectangle, draw, black, minimum height = 0.6cm, minimum width = 1.8cm, node distance=3.7cm, align=center, below right = 0.6cm and 0.6cm of targets.east] (the_sensor){$\sens_{u,j}$};

    \node[rectangle, draw=lightgray, text=lightgray, minimum height = 0.6cm, minimum width = 1.8cm, node distance=3.7cm, align=left, above=1.6cm of the_sensor.center] (inactive_sensor){$\sens_{u,j'}$};

    \node[below right = 0.4cm and 0.3cm of the_sensor.west] (downvdots) {\scalebox{0.8}{$\vdots$}};
    \node[above right = 0.95cm and 0.3cm of the_sensor.west] (midvdots) {\scalebox{0.8}{$\vdots$}};
    \node[above right = 2.55cm and 0.3cm of the_sensor.west] (upvdots) {\scalebox{0.8}{$\vdots$}};

    \draw [decorate,decoration={brace,amplitude=5pt,mirror,raise=4ex}]
      (1.1,-1.2) -- (7.3,-1.2) node[midway,yshift=-3em]{$K_{\mathrm{a}}$ active sensors out of $\uK$ sensors};
    

    \node[draw, rectangle, minimum width=1.8cm, minimum height=0.6cm, right=1.0cm of the_sensor, align=center] (quant) {$\quant$};
    \node[inner sep=0.8pt, right=0.3cm of quant] (input) {$W_{u,j}$};

    \node[inner sep=0.8pt, right=0.3cm of inactive_sensor, lightgray] (emptyset) {$\emptyset$};
    \draw[->] (inactive_sensor) -- (emptyset);

    \draw[->] (quant) -- (input.west);
    \draw[->] (inactive_sensor.west) ++ (-0.6,0) --  (inactive_sensor.west);
    \draw[->] (the_sensor.west) ++ (-0.6,0) --  (the_sensor.west);
    \draw[->] (the_sensor) -- node[above] {\scriptsize $\vecp_{O_{u,j}}$} (quant.west);
    
    \node[above right = 0.47cm and -0.18cm of the_sensor.west, red] (sensor_label) {active sensor $j$ in zone $u$};

    \node[right = 0cm of emptyset.east, lightgray]  
    (inactive_sensor_label) {sensor $j'$ in zone $u$ is inactive};

    \begin{scope}[on background layer]
        \node[draw, dashed, red, rounded corners, fit={(quant) 
        (input) (the_sensor) (sensor_label)}, inner xsep=0.14cm, inner ysep=0.1cm, yshift=-0.07cm, xshift=-0.0cm] (active_sensor_border) {};
        \node[draw, dashed, lightgray, rounded corners, fit={ (inactive_sensor) (inactive_sensor_label) (emptyset)}, inner xsep=0.2cm, inner ysep=0.2cm, yshift=-0.0cm, xshift=-0.1cm] (inactive_sensor_border) {};
    \end{scope}

\end{tikzpicture}
\vspace{-1cm}
\caption{Block diagram for message generation process in multi-target localization with TUMA.}
\label{fig:msg_gen_mtl_tuma}
\end{figure}

Each active sensor then quantizes the detected target position using a predefined quantization codebook $\setQ = \{\vecq_1, \dots, \vecq_{\uM}\} \subset \setD$. The quantization function is defined as $\quant : \setD \rightarrow [\uM]$ with $\quant(\vecp_{O_{u,j}}) = W_{u,j}$. The index~$W_{u,j}$ is the message transmitted by the active sensor~$(u,j)$ in the TUMA communication framework. A block diagram illustrating this message generation process is shown in Fig.~\ref{fig:msg_gen_mtl_tuma}.

\subsection{Sensing Model} \label{sec:mtl_sensing}

We adopt a probabilistic sensing model based on mmWave 5G SLAM to capture realistic sensing dynamics. Sensing and communication are performed over two different physical links operating over different frequency bands. Specifically, sensing is performed at mmWave, whereas communication occurs over a lower frequency narrowband channel. We assume that the sensors transmit orthogonal frequency division multiplexing~(OFDM) pilot signals to detect targets via reflected signals. 
Using the sensing formulation in~\cite{henk_slam_pdet_2020}, we model the probability that a sensor at position $\vecs$ detects a target at position $\vecp$ as
\begin{equation} \label{eq:pdet}
    p_\text{d}(\vecs,\vecp) = Q_1\mathopen{}\left( 
    \sqrt{\frac{2 \uN_\mathrm{s} \uP_\mathrm{s} \uS \lambda^2}{(4\pi)^3 \uP_\mathrm{n} \lVert \vecs-\vecp\rVert^4}}, \sqrt{\gamma}
    \right).
\end{equation}
Here, $Q_1(\cdot,\cdot)$ is the Marcum-Q function of order $1$, $\uN_\mathrm{s}$ is the number of OFDM symbols used for sensing, referred to as sensing blocklength, and $\uP_\mathrm{s}$ is the per-symbol transmit power allocated for sensing. In addition, $\uS$ corresponds to the radar cross-section of the target, while $\lambda$ denotes the signal wavelength, computed as the speed of light divided by the carrier frequency $f_\mathrm{c}$. The parameter $\uP_\mathrm{n}$ represents the thermal noise power. The threshold $\gamma$ determines the false alarm probability according to $p_\text{fa} = e^{-\gamma/2}$~\cite[Eq.~(8)]{henk_slam_pdet_2020}. Finally, to simplify the analysis, we assume that, while targets can be misdetected, sensors can perfectly localize the detected targets, leveraging the high-resolution capabilities of mmWave sensing~\cite{shahmansoori_2018,abushaban_2018, ge2022slam}.\footnote{The perfect-localization assumption can be relaxed to requiring that the localization error be much smaller than the quantization resolution. Such an error changes the quantized position, and thus the transmitted message, only for targets whose distance from a quantization-region boundary is smaller than the localization error. For uniformly distributed targets, this occurs with probability proportional to the localization error divided by the quantization resolution, which is negligible in this regime. Hence, localization errors have a negligible impact on the reported performance.}

\subsection{Quantization Model}

We employ a fixed uniform quantization over the region~$\setD$, without optimizing the quantization points. Specifically, we assume that the quantized positions define a regular grid within~$\setD$. Each active sensor $(u,j)$ maps the detected target position $\vecp_{O_{u,j}}$ to the nearest point in the quantization codebook $\setQ$ in Euclidean distance:
\begin{equation}
W_{u,j} = \quant\left(\vecp_{O_{u,j}}\right) = \argmin_{m \in [\uM]} \lVert \vecq_m - \vecp_{O_{u,j}} \rVert. \label{eq-scalar_quant}
\end{equation}
In Fig.~\ref{fig:localization}, we illustrate an example of the sensing and quantization process, showing target locations, sensor detections along with their corresponding detection regions, and uniform quantization grids. Note that this figure complements Fig.~\ref{fig:topology}, which shows zone partitioning and AP placements.

\begin{figure}[t!]
    \centering
    \resizebox{0.75\columnwidth}{!}
    {\definecolor{darkgreen}{rgb}{0.0, 0.5, 0.0}
\definecolor{yelloworange}{rgb}{1.0, 0.65, 0.05}
\definecolor{bluecolor}{rgb}{0.0, 0.0, 1.0}
\def\SensorSize{0.25} 

\begin{tikzpicture}

\def\N{4} 
\def\D{4} 
\def\quantCellsize{3} %
\def\APsize{0.3} 
\def\CrossSize{0.1} 
\def\sensradius{\quantCellsize/2.5} 

\draw[black, very thick] (0, 0) rectangle (3*\D, 3*\D);

\foreach \x in {1,2,3} {
    \draw[dotted] (\x*\quantCellsize, 0) -- (\x*\quantCellsize, 3*\D);
}
\foreach \y in {1,2,3} {
    \draw[dotted] (0, \y*\quantCellsize) -- (3*\D, \y*\quantCellsize);
}

\foreach \sx/\sy in {
    1.2/5.73, 1.77/2.43, 5.1/8.3, 10.4/10.0, 8.8/1.6, 10.5/2.6, 4.8/3.8, 5.3/5.2, 6.0/4.7
} {
    \fill[gray!50, opacity=0.3] (\sx, \sy) circle (\sensradius);
}

\draw[dashed, gray] (1.9, 6.23) -- (1.2, 5.73);
\draw[dashed, gray] (1.9, 6.23) -- (1.5, 7.5);
\draw[dashed, gray] (1.2, 5.73) -- (1.5, 7.5);
\draw[dashed, gray] (5.1, 8.3) -- (5.7, 7.5);
\draw[dashed, gray] (5.1, 8.3) -- (4.5, 7.5);
\draw[dashed, gray] (5.7, 7.5) -- (4.5, 7.5);

\draw[dashed, gray] (10.5, 2.6) -- (9.6, 2.2);
\draw[dashed, gray] (10.5, 2.6) -- (10.5, 1.5);
\draw[dashed, gray] (9.6, 2.2) -- (10.5, 1.5);

\draw[dashed, gray] (8.8, 1.6) -- (9.6, 2.2);
\draw[dashed, gray] (8.8, 1.6) -- (10.5, 1.5);
\draw[dashed, gray] (9.6, 2.2) -- (10.5, 1.5);

\draw[dashed, gray] (5.6, 4.3) -- (4.8, 3.8);
\draw[dashed, gray] (5.6, 4.3) -- (4.5, 4.5);
\draw[dashed, gray] (4.8, 3.8) -- (4.5, 4.5);
\draw[dashed, gray] (5.6, 4.3) -- (5.3, 5.2);
\draw[dashed, gray] (5.6, 4.3) -- (4.5, 4.5);
\draw[dashed, gray] (5.3, 5.2) -- (4.5, 4.5);
\draw[dashed, gray] (5.6, 4.3) -- (6.0, 4.7);
\draw[dashed, gray] (5.6, 4.3) -- (4.5, 4.5);
\draw[dashed, gray] (6.0, 4.7) -- (4.5, 4.5);

\foreach \x in {0,1,2,3} {
    \foreach \y in {0,1,2,3} {
        \node[fill=black, scale=1] at (\x*\quantCellsize+1.5, \y*\quantCellsize+1.5) {};
    }
}

\foreach \x/\y in {
    1.2/5.73, 5.1/8.3, 8.8/1.6, 10.5/2.6, 4.8/3.8, 5.3/5.2, 6.0/4.7
} {
    \node[bluecolor, scale=2.5] at (\x, \y) {\textbf{\texttimes}};
}

\foreach \x/\y in {
    1.77/2.43, 10.4/10.0
} {
    \node[lightgray, scale=2.5] at (\x, \y) {\textbf{\texttimes}};
}

\foreach \x/\y in {
    1.9/6.23, 5.7/7.5, 9.6/2.2, 5.6/4.3
} {
    \node[fill=red, circle, scale=1] at (\x, \y) {};
}

\foreach \x/\y in {
    8.9/8.23, 4.8/0.6
} {
    \node[fill=lightgray, circle, scale=1] at (\x, \y) {};
}

\node[scale=2.3] at (1.1,7.9) {$\vecq_{9}$};
\node[scale=2.3] at (4.4,6.9) {$\vecq_{10}$};
\node[scale=2.3] at (11.0,1.0) {$\vecq_4$};
\node[scale=2.3] at (4.0,4.8) {$\vecq_6$};

\node[fill=black, scale=1] at (13.55, 8.8) {};
\node[right, scale=2] at (14, 8.8) {quantized position};
\node[fill=red, circle, scale=1] at (13.55, 7.4) {};
\node[right, scale=2] at (14, 7.4) {detected target};
\node[fill=lightgray, circle, scale=1] at (13.55, 6.0) {};
\node[right, scale=2] at (14, 6.0) {undetected target};
\node[bluecolor, scale=2.5] at (13.55, 4.6) {\textbf{\texttimes}};
\node[right, scale=2] at (14, 4.6) {active sensor};
\node[lightgray, scale=2.5] at (13.55, 3.2) {\textbf{\texttimes}};
\node[right, scale=2] at (14, 3.2) {inactive sensor};


\end{tikzpicture}}
    \caption{An example of multi-target position localization in a square with $\uT=6$ targets, of which $T_\text{d} = 3$ are detected, $\uK=9$ sensors, of which $K_\mathrm{a} = 7$ are active, and $\uM = 16$ quantized positions. Circles around sensors depict their detection range.
    }
    \label{fig:localization}
\end{figure}
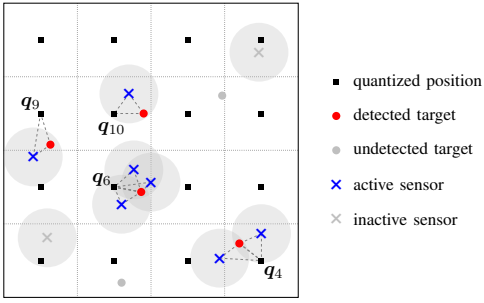

\subsection{Prior Computation}
We next describe the prior that will be used in the Bayesian PME (see Section~\ref{subsec:prior}).

\subsubsection{Sensor Activation Probability}
A sensor becomes active if it detects at least one target. For a sensor at position $\vecs$, the activation probability is given by
\begin{equation}
    p_{\text{active}}(\vecs) = 1 - \frac{1}{|\setD|^\uT} \int_{\setD^\uT} \prod_{i=1}^\uT (1- p_{\text{d}}(\vecs, \vecp_i)) \, \mathrm{d} \vecp_{1:\uT},
\end{equation}
where $\vecp_{1:\uT} = \{\vecp_1,\dots, \vecp_\uT\}$ denote the target positions and $p_d(\cdot,\cdot)$ is the detection probability in \eqref{eq:pdet}. Averaging over all possible sensor locations under the assumed uniform sensor location distribution, we conclude that the sensor activation probability is
\begin{align}
    p_{\text{active}} = \frac{1}{|\setD|} \int_{\setD} p_{\text{active}}(\vecs) \, \mathrm{d} \vecs.
\end{align} 
To obtain a tractable prior model, we approximate the number of active sensors as $K_\mathrm{a} \sim \Bin(\uK,p_{\text{active}})$. This approximation assumes independent sensor activations and neglects the correlation induced by multiple sensors observing the same targets.

\subsubsection{Distribution of Active Sensors in Zones}
Assuming a uniform sensor placement and equal-area zones, we conclude that the number of active sensors in zone \( u \), denoted by \( K_{\mathrm{a},u} \), follows a binomial distribution \( \Bin(K_\mathrm{a}, 1/\uU) \) given \( K_\mathrm{a} \).
Marginalizing over $K_\mathrm{a}$, we obtain
    \begin{align}
        p(K_{\mathrm{a},u}=k) 
        &= \sum_{K_\mathrm{a}=0}^\uK \Bin(k; K_\mathrm{a}, 1/\uU) \Bin(K_\mathrm{a}; \uK, p_{\text{active}}).
\end{align}

\subsubsection{Message Selection Probability}
Each active sensor reports the closest detected target. For a sensor at position $\vecs$, the probability that a target at position $\vecp$ is the closest among the detected ones is
\begin{align}
    p_{\text{closest}}(\vecs, \vecp)  = & \frac{1}{|\setD|^{\uT-1}} \int_{\setD^{\uT-1}} \prod_{\vecp_i'\neq \vecp} (1-p_d(\vecs, \vecp'_i) \notag \\ & 
    \cdot \mathbbm{1} \{\lVert \vecs - \vecp_i' \rVert < \lVert \vecs - \vecp\rVert\}) \, \mathrm{d}\vecp_{1:\uT-1}'.
\end{align} 
Let $\setR_m = \{\vecp \in \setD \sothat \lVert \vecq_m - \vecp \rVert \leq \lVert \vecq_{m'} - \vecp \rVert,  \, \forall m'\in[\uM], \, m'\neq m \}$ denote the Voronoi region associated with
the center point $\vecq_m$.
The probability that an active sensor at position~$\vecs$ in zone $u$  selects the message index $m$ is given by
\begin{align}
    p(m &\mid \text{sensor in zone $u$}) \notag \\ &= \frac{1}{|\setD|} \frac{1}{|\setD_u|} \int_{\setD_u} \int_{\setR_m} p_{\text{d}}(\vecs, \vecp) \, p_{\text{closest}}(\vecs, \vecp) \, \mathrm{d}\vecp \mathrm{d}\vecs . 
\end{align}

\subsubsection{Prior for Local Multiplicities}
Given $K_{\mathrm{a},u}$ active users in zone $u$, the local multiplicities $k_{u,m}$ follow a $\Bin(K_{\mathrm{a},u}, p(m \mid \text{sensor in zone $u$}))$ distribution. By marginalizing over $K_\mathrm{a}$ and $K_{\mathrm{a},u}$, we obtain the prior as
\begin{align}
    p&(k_{u,m}=k) \notag \\ &= \sum_{K_\mathrm{a}=0}^{\uK}\sum_{K_{\mathrm{a},u}=0}^{K_\mathrm{a}} \Bin(k; K_{\mathrm{a},u}, p(m \mid \text{sensor in zone $u$}))  \notag \\ & \qquad \qquad  \: \cdot \Bin(K_{\mathrm{a},u}; K_\mathrm{a}, 1/\uU) \Bin(K_\mathrm{a}; \uK, p_{\text{active}}). \label{eqn-final_prior_expression}
\end{align}

\subsection{Performance Metrics}

We evaluate the performance of the proposed TUMA-based multi-target localization system using metrics that capture the impact of sensing, quantization, and communication. To this end, we consider four complementary metrics:

\subsubsection{TV Distance} The TV distance, introduced in~\eqref{eq:tv_dist_expression}, measures the error in estimating the type. This metric characterizes only communication performance, and does not account for sensing misdetections and quantization distortion. Note that when message collisions are absent, the TV distance reduces to the per-user probability of error~\cite[Def.~1]{polyanskiy2017} commonly used in the UMA literature~\cite{polyanskiy2017, fengler_sparc, ngo_2023_random_user_activity}.

\subsubsection{Wasserstein Distance} 
Let $\vecell=[\ell_1,\dots,\ell_\uT]$, where $\ell_i = \sum_{u=1}^\uU \sum_{j=1}^{K_u} \mathbbm{1} \{O_{u,j}=i\}$ is the number of sensors that detect target $i \in [\uT]$, and define the corresponding type as $\vecomega = \vecell/K_\mathrm{a}$. 
We can then define the empirical distribution associated with the true target positions as $\mu = \sum_{i=1}^{\uT} \omega_i \delta(\vecp_i)$ and its estimate at the receiver as $\hat{\mu} = \sum_{j=1}^{\uM} \hat{t}_j \delta(\vecq_j)$.

To capture both quantization and communication errors, we consider the metric
\begin{equation}
\overline{\mathbb{W}}_p = \mathbb{E}\left[\mathbb{W}_p(\mu, \hat{\mu}) \right],    \label{eq1:was_dist}
\end{equation}
where the expectation is over the random target and sensor positions, small-scale fading, and additive noise. Here, $\mathbb{W}_p(\cdot,\cdot)$ is the $p$-Wasserstein distance~\cite[Chap.~2]{Payre2019_optTransport} defined as
\begin{align}
    \mathbb{W}_p(\mu, \hat{\mu}) = & 
    \inf_{e_{i,j} \in [0,1]} \left(
    \sum_{i=1}^\uT \sum_{j=1}^\uM e_{i,j} \lVert \vecp_i - \vecq_j \rVert_p^p
    \right)^{1/p} \notag \\
    & \quad \text{subject to } \textstyle \sum_{i=1}^\uT e_{i,j} = \hat{t}_j, \, \sum_{j=1}^\uM e_{i,j} = \omega_i. \label{eq-wdist}
\end{align}
This metric captures the mismatch between true and estimated types through an optimal transport formulation. This distance and its generalizations have been widely used in multi-target localization~\cite{Hoffman2004,Schuhmacher2008,Rahmathullah2017}.

\subsubsection{Empirical Misdetection Probability} Let \( \setT_\text{d} = \{ \vecp_i \in \setT : \omega_i > 0 \} \) denote the set of detected targets and let $T_\text{d}=|\setT_d|$. The empirical misdetection probability is defined as
\begin{equation} 
    p_\text{md} = \Exop\left[\frac{|\setT \setminus \setT_\text{d}|}{|\setT|} \right] = 1 - \frac{\mathbb{E}\left[T_\text{d} \right]}{\uT}. \label{eq-pmis}
\end{equation}
This metric isolates the sensing errors by quantifying the average fraction of undetected targets.

\subsubsection{GOSPA-Like Cost} To capture in a single metric the impact of sensing and communication errors, as well as quantization effects, we define the following cost function
\begin{equation}
    d^{(c,p)} = \mathbb{E}\left[\left( 
    \mathbb{W}_p(\mu, \hat{\mu})^p + c^p (1-T_\text{d}/\uT)
    \right)^{1/p} \right], \label{eq:gospa_like_cost}
\end{equation}
which extends the GOSPA metric~\cite{Rahmathullah2017} by replacing its assignment-based localization term in~\cite{Rahmathullah2017} with a Wasserstein distance and incorporating misdetection cost via the \( 1-T_\text{d}/\uT \) term. The parameter $c$ governs the tradeoff between localization errors for detected targets and misdetection probability, while $p$ determines how strongly large localization errors influence the overall cost.

\section{Simulation Results and Discussion} \label{Sec:SimResults}

\subsection{Simulation Setup} \label{SubSec:SimSetup}

We consider a D-MIMO system in which $\uB=40$ APs are deployed to form a $3 \times 3$ square grid layout over a square coverage area, as illustrated in Fig.~\ref{fig:topology}. Each AP is equipped with $\uA=4$ antennas, leading to $\uF=160$ receive antennas in total. Following the setup in \cite{cakmak_2025_journal}, we model the LSFC as $\gamma_b(\vecrho) = 1/(1 + \left( \lVert \vecrho - \vecnu_b \rVert /d_0\right)^{\beta})$, where $\vecnu_b$ denotes the position of AP~$b$, the path-loss exponent is $\beta=3.67$ and the $3 \text{ dB}$ cutoff distance is $d_0=13.57 \text{ m}$. The side length of each zone is set to~$\qty{100}{m}$. The AP placement together with the LSFC model induce a partition of the coverage area into $3 \times 3$ nonoverlapping square zones, resulting in $\uU=9$ zones. As in \cite{cakmak_2025_journal}, we define the received SNR as $\text{SNR}_{\text{rx}} = \text{SNR}_{\text{tx}} / (1 + (\varsigma/d_0)^\alpha)$, where $\varsigma=50 \text{ m}$ denotes the distance between a zone centroid (green dot in Fig.~\ref{fig:topology}) and its closest AP. The codewords~$\{\vecc_{u,m}\}$ are 
independently generated by first sampling each entry independently from a Gaussian distribution~$\mathcal{C}\mathcal{N}(0, 1/\uN_\mathrm{c})$, and then normalizing each codeword such that $\lVert \vecc_{u,m}\rVert_2^2=1$. 
The number of MC samples used in the multisource AMP decoder is set to $\uN_{\text{MC}} = 300$. This value was selected empirically by increasing the number of MC samples until no noticeable improvement in the type estimation performance was observed. The maximum multiplicity considered by the Bayesian denoiser is set to $\uK_{\text{max}}=15$. This value is chosen such that the probability mass beyond the truncation level is negligible under the adopted multiplicity prior. We run the decoder for $\uT_\text{AMP}=10$ iterations. This value was selected based on the observed convergence of the channel estimates. We obtain all reported performance metrics by averaging over $100$ independent MC simulation runs, with independent realizations of user positions, target positions, small-scale fading, and additive noise.\footnote{The simulation code is available at \href{https://github.com/okumuskaan/tuma_mtl}{\nolinkurl{https://github.com/okumuskaan/tuma_mtl}} to support reproducible research.}

We evaluate the performance of the proposed TUMA framework in the multi-target localization setting. A total of $\uT=50$ targets and $\uK=200$ sensors are independently and uniformly placed over the coverage area $\setD$ in each simulation run. All performance results are averaged over $100$ independent simulations. For simulations involving the Wasserstein distance~\eqref{eq-wdist} and the GOSPA-like cost function~\eqref{eq:gospa_like_cost}, we set the exponent $p = 2$, which corresponds to the commonly used Euclidean metric. The weight parameter $c$ is  chosen to ensure that a single misdetection 
is equivalent in terms of cost to a localization error where the distance between true and estimated positions coincides the distance between adjacent quantization points under a $6$-bit resolution. For our $300 \text{ m} \times 300 \text{ m}$ area, this results in $c=300/\sqrt{2^6}=37.5 \text{ m}$.

\begin{table}[t] 
\centering
\caption{Simulation Parameters}
\label{tab:simulation_parameters}
\renewcommand{\arraystretch}{1.15}

\begin{tabular}{m{2cm}|m{3.7cm}|m{1.6cm}}
\toprule
& Parameter & Value \\
\midrule

\multirow{4}{=}{\centering\arraybackslash\shortstack{Topology\\parameters}}
& Number of zones, $\uU$ & 9 \\
& Number of APs, $\uB$ & 40 \\
& Antennas per AP, $\uA$ & 4 \\
& Coverage area & $300\times300$ m$^2$ \\

\midrule

\multirow{8}{=}{\centering\arraybackslash\shortstack{Sensing\\parameters}}
& Number of sensors, $\uK$ & $200$ \\
& Number of targets, $\uT$ & $50$ \\
& Sensing blocklength, $\uN_\mathrm{s}$ & $1000$ \\
& Sensing power, $\uP_\mathrm{s}$ & $10^{-3}$ W \\
& Carrier frequency, $f_c$ & $28$ GHz \\
& Radar cross section, $\uS$ & $10$ $\text{m}^2$  \\
& Thermal noise power, $\uP_\mathrm{n}$ & $10^{-8}$ W \\
& Detection threshold, $\gamma$ & $36.84$ \\

\midrule

\multirow{2}{=}{\centering\arraybackslash\shortstack{Quantization\\parameters}}
& Quantization resolution, $\uJ$ & 10 bits \\
& Number of messages, $\uM$ & $2^\uJ$ \\

\midrule

\multirow{5}{=}{\centering\arraybackslash\shortstack{Communication\\parameters}}
& Communication blocklength, $\uN_\mathrm{c}$ & $1000$ \\
& Communication power, $\uP_\mathrm{c}$ & $10^{-3}$ W \\
& Pathloss exponent, $\beta$ & $3.67$ \\
& Reference distance, $d_0$ & $13.57$ m \\
& Received SNR, $\text{SNR}_\text{rx}$ & $10 \text{ dB}$ \\

\midrule

\multirow{3}{=}{\centering\arraybackslash\shortstack{Decoder\\parameters}}
& Maximum multiplicity, $\uK_{\text{max}}$ & $15$ \\
& Monte Carlo samples, $\uN_{\mathrm{MC}}$ & $300$ \\
& AMP iterations, $\uT_\text{AMP}$ & 10 \\

\midrule

\multirow{2}{=}{\centering\arraybackslash\shortstack{GOSPA-like cost\\parameters}}
& Exponent, $p$ & $2$ \\
& Weight parameter, $c$ & $37.5$ m \\

\bottomrule
\end{tabular}
\end{table}

The sensing parameters in the target detection probability in~\eqref{eq:pdet} are set as follows: the radar cross-section area is $\uS=10 \text{ m}^2$, the carrier frequency is $f_\mathrm{c}=28 \text{ GHz}$, and thermal noise power is $\uP_\mathrm{n}=10^{-8} \text{ Watts}$.\footnote{These parameter values are chosen to reflect a typical mmWave setting. The noise power $\uP_\mathrm{n} = 10^{-8} \text{ W}$ corresponds to a $200 \text{ MHz}$ bandwidth and a noise power spectral density of $-174 \text{ dBm/Hz}$. A radar cross-section of $10 \text{ m}^2$ models medium-sized targets, and $f_\mathrm{c} = 28 \text{ GHz}$ is a commonly chosen carrier frequency in mmWave sensing applications.} The per-symbol sensing power is set equal to the communication power $\uP_\mathrm{s}=\uE_\mathrm{c} / \uN_\mathrm{c} =1 \text{ mW}$. The detection threshold is set to $\gamma = 36.84$, corresponding to a false alarm probability $p_\text{fa}=10^{-8}$. In the simulations, false alarms are ignored. 

The total blocklength \( \uN \) is shared between sensing and communication as \( \uN = \uN_\mathrm{s} + \uN_\mathrm{c} \), according to a time-division resource allocation strategy. Similar time-sharing models have been widely adopted in the literature to characterize tradeoffs under limited resources~\cite{zhang_2022_enabling,liu_2020_jointradarcomm,liu_2022_isac}. Unless stated otherwise, we assume \( \uN = 2000 \), with an equal allocation of resources for sensing and communication, i.e., $ \uN_{\mathrm{s}} = \uN_{\mathrm{c}} = 1000$. 

The simulation parameters used throughout this section are summarized in Table~\ref{tab:simulation_parameters}.

\subsection{Results}

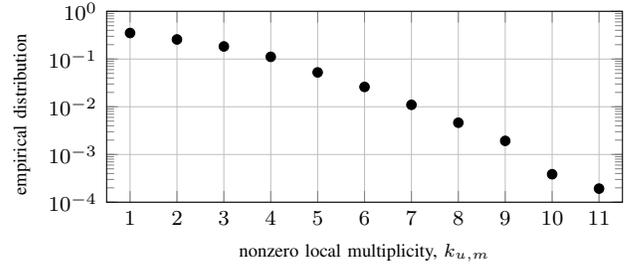
\begin{figure}[t!]
    \centering
    \begin{tikzpicture}
    \tikzstyle{every node}=[font=\footnotesize] 
        \begin{semilogyaxis}[
            scale only axis,
            width=2.7in,
            height=1.0in,
            grid=both,
            grid style={line width=.1pt, draw=gray!10},
            major grid style={line width=.2pt,draw=gray!50},
            xlabel={\scriptsize nonzero local multiplicity, $k_{u,m}$},
            ylabel={\scriptsize empirical distribution},
            ytick = {1e-4, 1e-3, 1e-2, 1e-1,1e0},
            grid=major,
            xtick=data,
            ymin=1e-4,
            ymax=1,
            xmin=0.5,
            xmax=11.5,
            ylabel style = {xshift=-1mm},
        ]
\addplot [
    only marks,
    mark size=1.8pt,
] table[x expr=\coordindex+1, y index=0] {datafile.dat};

    \end{semilogyaxis}
\end{tikzpicture}
    \vspace{-1.0cm}
    \caption{Empirical distribution of nonzero local multiplicities conditioned on message activity ($k_{u,m}>0$) for the multi-target localization scenario described in Section~\ref{SubSec:SimSetup}. Here, we consider $\uT=50$ targets, $\uK=200$ sensors, $\log_2\uM=10$ bits, and a sensing blocklength $\uN_\mathrm{s}=1000$. 
    } 
    \label{fig:multiplicity-emp_dist} 
\end{figure}

We start by verifying that, for the scenario described in Section~\ref{SubSec:SimSetup}, it is likely that sensors within the same zone transmit the same codewords, resulting in local multiplicities $k_{u,m}>1$, which, in turn, justifies the need for the proposed TUMA framework. Notably, as shown in Fig.~\ref{fig:multiplicity-emp_dist}, more than $70\%$ of the transmitted codewords are involved in collisions, i.e., they are transmitted by multiple sensors. 

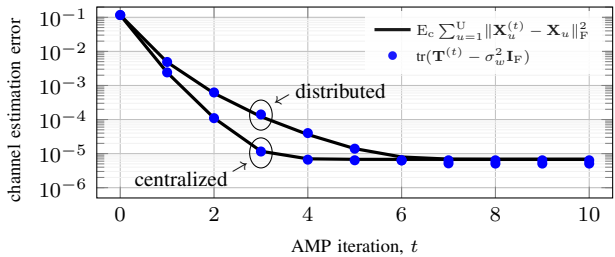
\begin{figure}[t!]
    \centering
    \begin{tikzpicture}
    \tikzstyle{every node}=[font=\footnotesize] 
    \begin{semilogyaxis}[
    scale only axis,
    width=2.7in,
    height=1.0in,
    grid=both,
    grid style={line width=.1pt, draw=gray!10},
    major grid style={line width=.2pt,draw=gray!50},
    xmin=-0.5,
    xmax=10.5,
    xlabel={\scriptsize AMP iteration, $t$},
    ymin=5e-7,
    ymax=0.2,
    ytick = {1e-6, 1e-5, 1e-4,1e-3,1e-2,1e-1,1e0},
    ylabel={\scriptsize channel estimation error},
    ylabel style = {xshift=-1mm},
    legend style={
        draw=none,opacity=.9,
        nodes={scale=0.8, transform shape},  
    },
    legend cell align={left},
    legend image post style={scale=0.8},  
    ]

    \addplot [color=black, mark=none, line width=1.2pt, mark size=1pt]
    table[row sep=crcr]{%
	0	0.11616031495855932 \\
	1	0.002402610958829703 \\
	2	0.00011063577630277331 \\
	3	1.1755417736420586e-05 \\
	4	7.0354516664855214e-06 \\
	5	6.797528996330245e-06 \\
	6	6.782926866467179e-06 \\
	7	6.780049930496581e-06 \\
	8	6.7802444428522125e-06 \\
	9	6.7801637838649895e-06 \\
	10	6.780131248663606e-06 \\
    };
    \addlegendentry{\scriptsize $
    \uE_\mathrm{c} \sum_{u=1}^\uU \mathbb{E} \mathopen{} [ \lVert \mathbf{X}_u^{(t)} - \mathbf{X}_u \rVert_\mathrm{F}^2]$};

    \addplot [color=blue, mark=*, only marks, line width=1.2pt, mark size=1.3pt]
    table[row sep=crcr]{%
	0	0.11638235968563723 \\
	1	0.0024188022235806248 \\
	2	0.00010998861953788625 \\
	3	1.1533530451083638e-05 \\
	4	6.6524798519037885e-06 \\
	5	6.384593349275968e-06 \\
	6	6.3703035623526256e-06 \\
	7	6.369665983659731e-06 \\
	8	6.369621386512883e-06 \\
	9	6.369612577525055e-06 \\
	10	6.369615384312077e-06 \\
    };
    \addlegendentry{\scriptsize $\text{tr}(\mathbf{T}^{(t)} - \sigma_w^2 \mathbf{I}_\uF )$};

    \addplot [color=black, mark=none, line width=1.2pt, mark size=1pt]
    table[row sep=crcr]{%
	0	0.11616031495855929	 \\
	1	0.0048278866771739705	 \\
	2	0.0005896668787201176	 \\
	3	0.00012173810912796916	 \\
	4	3.637201500937405e-05	 \\
	5	1.4019237380646526e-05	 \\
	6	7.973442926724247e-06	 \\
	7	6.90521296146539e-06	 \\
	8	6.847127468391683e-06	 \\
	9	6.84343719821039e-06	 \\
	10	6.843596706121466e-06	 \\
    };

    \addplot [color=blue, mark=*, only marks, line width=1.2pt, mark size=1.3pt]
    table[row sep=crcr]{%
	0	0.11638235968563723	 \\
	1	0.004890919471349241	 \\
	2	0.00062352692316806	 \\
	3	0.0001407462638844404	 \\
	4	4.004521057167275e-05	 \\
	5	1.3957317858864098e-05	 \\
	6	6.323990642757098e-06	 \\
	7	5.181713666853855e-06	 \\
	8	5.10966748907614e-06	 \\
	9	5.109213897023311e-06	 \\
	10	5.109156645599673e-06	 \\
    };

    \coordinate (Center) at (axis cs:3,1.35e-4);
    \coordinate (Center2) at (axis cs:3,1.05e-5);

    \node[align = center] at (axis cs:1.3, 1.8e-6) () {\scriptsize centralized};
    \node[align = center] at (axis cs:4.7,7e-4) () {\scriptsize distributed};

    \node [black, inner sep=0, rotate=300] at (axis cs:2.55,4e-6) {\scriptsize $\uparrow$};

    \node [black, inner sep=0, rotate=120] at (axis cs:3.45,4e-4) {\scriptsize $\uparrow$};

  \end{semilogyaxis} 

    \draw (Center) ellipse [x radius=0.15, y radius=0.2];
    \draw (Center2) ellipse [x radius=0.15, y radius=0.2];

\end{tikzpicture}    
    \vspace{-1.0cm}
    \caption{Channel estimation error vs. multisource AMP iteration for $\text{SNR}_{\text{rx}} = 10 \text{ dB}$, $\uN_\mathrm{c}=\uN_\mathrm{s}=1000$, and ${\uM} = 2^{10}$.}
    \label{fig:channel_est_vs_AMP}
    \vspace{-.3cm}
\end{figure}

To validate the modified multisource AMP decoder, we analyze the convergence behavior of the estimated effective channel matrix $\mathbf{X}$. As shown in Fig.~\ref{fig:channel_est_vs_AMP}, both centralized and distributed AMP decoders exhibit fast convergence, typically within $t=6$ iterations. This aligns with the state evolution result in~\cite[Eq.~(79)]{cakmak_2025_journal}, which predicts that the estimation error~$\uE_\mathrm{c} \sum_{u=1}^\uU \lVert \mathbf{X}_u^{(t)} - \mathbf{X}_u \rVert_\mathrm{F}^2$ converges to  $\text{tr}(\mathbf{T}^{(t)} - \sigma_w^2 \mathbf{I}_\uF )$ in the large-system limit, where $\uN_\mathrm{c} \rightarrow \infty$ with $\uM/\uN_\mathrm{c}$ fixed. 

In Fig.~\ref{fig:channel_est_vs_AMP_imperfectLSFC}, we evaluate the
sensitivity of the proposed centralized decoder to imperfect LSFC
information. Following~\cite{gkiouzepi2024jointmessagedetectionchannel},
we use the centroid of the zone enclosing the sensor rather than its true position to compute the LSFCs and multiplicity priors used by the decoder algorithm. This setup is practically relevant since exact sensor positions and LSFC information are generally unavailable in practice. We observe that LSFC mismatch results in a negligible performance loss, indicating that the proposed decoder is robust. Moreover, Fig.~\ref{fig:channel_est_vs_AMP_imperfectLSFC} illustrates the impact of $\text{SNR}_{\text{rx}}$ on the convergence behavior of multisource AMP. As expected, lower $\text{SNR}_{\text{rx}}$ values lead to higher channel estimation error floors, whereas larger $\text{SNR}_{\text{rx}}$ values yield more accurate estimates.

\begin{figure}[t!]
    \centering
    \begin{tikzpicture}
    \tikzstyle{every node}=[font=\footnotesize] 
    \begin{semilogyaxis}[
    scale only axis,
    width=2.7in,
    height=1.0in,
    grid=both,
    grid style={line width=.1pt, draw=gray!10},
    major grid style={line width=.2pt,draw=gray!50},
    xmin=-0.5,
    xmax=10.5,
    xlabel={\scriptsize AMP iteration, $t$},
    ymin=5e-7,
    ymax=0.2,
    ytick = {1e-6, 1e-5, 1e-4,1e-3,1e-2,1e-1,1e0},
    ylabel={\scriptsize channel estimation error},
    ylabel style = {xshift=-1mm},
    legend style={
        draw=none,opacity=.9,
        nodes={scale=0.8, transform shape},  
    },
    legend cell align={left},
    legend image post style={scale=0.8},  
    ]

    \addplot [color=black, mark=*, line width=1.2pt, mark size=1pt]
    table[row sep=crcr]{%
	0	0.11616031495855932 \\
	1	0.002402610958829703 \\
	2	0.00011063577630277331 \\
	3	1.1755417736420586e-05 \\
	4	7.0354516664855214e-06 \\
	5	6.797528996330245e-06 \\
	6	6.782926866467179e-06 \\
	7	6.780049930496581e-06 \\
	8	6.7802444428522125e-06 \\
	9	6.7801637838649895e-06 \\
	10	6.780131248663606e-06 \\
    };
    \addlegendentry{\scriptsize perfect LSFC information};



    \coordinate (Center) at (axis cs:3,4e-4);
    \coordinate (Center2) at (axis cs:3,1.4e-5);
    \coordinate (Center3) at (axis cs:4,7.0e-5);

    \node[align = center] at (axis cs:1.3, 1.4e-6) () {\scriptsize $\text{SNR}_\text{rx} = 10 \text{ dB}$};
    \node[align = center] at (axis cs:5,1.7e-3) () {\scriptsize $\text{SNR}_\text{rx} = -10 \text{ dB}$};

    \node[align = center] at (axis cs:5.9,2e-5) () {\scriptsize $\text{SNR}_\text{rx} = 0 \text{ dB}$};

    \node [black, inner sep=0, rotate=300] at (axis cs:2.55,4e-6) {\scriptsize $\uparrow$};

    \node [black, inner sep=0, rotate=120] at (axis cs:3.45,8e-4) {\scriptsize $\uparrow$};

    \node [black, inner sep=0, rotate=60] at (axis cs:4.45,3e-5) {\scriptsize $\uparrow$};

    \addplot [color=red, dashed, mark=*, line width=1.2pt, mark size=1pt]
    table[row sep=crcr]{
    0 0.14206235666904798       \\
    1 0.0028611310966103376     \\
    2 0.00019878777954977398    \\
    3 2.649646585442465e-05    \\
    4 8.967196763458853e-06     \\
    5 6.870359973634019e-06      \\
    6 6.582866636958393e-06     \\
    7 6.54413773166847e-06     \\
    8 6.520361212007202e-06     \\
    9 6.50045424028868e-06      \\
    10 6.4904692528730075e-06   \\
    };
    \addlegendentry{\scriptsize imperfect LSFC information};

    \addplot [color=black, mark=*, line width=1.2pt, mark size=1pt]
    table[row sep=crcr]{%
	0	0.1375468196327075	 \\
	1	0.002188064160770371	 \\
	2	0.00046771947413022533	 \\
	3	0.00039940947779938663	 \\
	4	0.00039782546233180975	 \\
	5	0.0003976427612548218	 \\
	6	0.0003976470044133883	 \\
	7	0.0003976470044133883	 \\
	8	0.0003976470044133883	 \\
	9	0.0003976470044133883	 \\
	10	0.0003976474516297258	 \\
    };

    \addplot [color=red, dashed, mark=*, line width=1.2pt, mark size=1pt]
    table[row sep=crcr]{%
	0	0.14216269563753653	 \\
	1	0.002988902522201238	 \\
	2	0.000545398731507387	 \\
	3	0.0004620642041722418	 \\
	4	0.00046015329463114054	 \\
	5	0.00046004415549838294	 \\
	6	0.0004600128054278661	 \\
	7	0.00046001623202410847	 \\
	8	0.00046001445544928345	 \\
	9	0.00046001456068656913	 \\
	10	0.0004600145445673519	 \\
    };

    \addplot [color=black, mark=*, line width=1.2pt, mark size=1pt]
    table[row sep=crcr]{%
0	 0.12626173555770984 	 \\
1	 0.002447198022290817 	 \\
2	 0.00014064003747482097 	 \\
3	 5.661179064261917e-05 	 \\
4	 5.364387184406301e-05 	 \\
5	 5.3533517516285465e-05 	 \\
6	 5.350597626221079e-05 	 \\
7	 5.3501839342653894e-05 	 \\
8	 5.3500166911521184e-05 	 \\
9	 5.3499988886294015e-05 	 \\
10	 5.3499954575679905e-05 	 \\
    };

    \addplot [color=red, dashed, mark=*, line width=1.2pt, mark size=1pt]
    table[row sep=crcr]{%
0	 0.1954642892263693 	 \\
1	 0.0031306063650428863 	 \\
2	 0.0002059308495701151 	 \\
3	 7.199842103609852e-05 	 \\
4	 6.175959051341425e-05 	 \\
5	 6.117273190064762e-05 	 \\
6	 6.1063734573609e-05 	 \\
7	 6.107638971752891e-05 	 \\
8	 6.107072443251831e-05 	 \\
9	 6.107244450067574e-05 	 \\
10	 6.107195101026044e-05 	 \\
    };

  \end{semilogyaxis} 

    \draw (Center) ellipse [x radius=0.15, y radius=0.2];
    \draw (Center2) ellipse [x radius=0.15, y radius=0.2];
    \draw (Center3) ellipse [x radius=0.15, y radius=0.2];

\end{tikzpicture}    
    \vspace{-1.0cm}
    \caption{Channel estimation error $\uE_\mathrm{c} \sum_{u=1}^\uU \mathbb{E}\mathopen{}[ \lVert \mathbf{X}_u^{(t)} - \mathbf{X}_u \rVert_\mathrm{F}^2]$ vs. multisource AMP iteration for centralized decoder with $\uN_\mathrm{c}=\uN_\mathrm{s}=1000$, and ${\uM} = 2^{10}$.}
    \label{fig:channel_est_vs_AMP_imperfectLSFC}
    \vspace{-.3cm}
\end{figure}

\begin{figure}[t!]
    \centering
    \begin{tikzpicture}
    \tikzstyle{every node}=[font=\footnotesize] 
    \begin{semilogyaxis}[
    scale only axis,
    width=2.7in,
    height=1.0in,
    grid=both,
    grid style={line width=.1pt, draw=gray!10},
    major grid style={line width=.2pt,draw=gray!50},
    xmin=-55,
    xmax=15,
    xlabel={\scriptsize received signal to noise ratio, $\text{SNR}_{\text{rx}}$ (dB)},
    ymin=9e-3,
    ymax=1.2,
    ytick = {1e-4,1e-3,1e-2,1e-1,1e0},
    ylabel={\scriptsize average total variation, $\dTV$},
    ylabel style = {xshift=-1mm},
    legend style={draw=none,opacity=.9,at={(0.37,0.01)},anchor=south east},
    ]

    \addplot [color=black, mark=*, line width=1.2pt, mark size=1pt]
    table[row sep=crcr]{%
            -50	 0.9565664096062562	 \\
            -40  0.3461087978446557 \\
            -30  0.1791343064924447 \\
            -20  0.10500101881791768 \\
            -10  0.08067728219009895 \\
            0    0.06531894558324716 \\
            10   0.06526538574898438 \\
    };
    \addlegendentry{\scriptsize centralized};

    \addplot [color=red, mark=*, dashed, line width=1.2pt, mark size=1pt]
    table[row sep=crcr]{%
            -50	 0.9545454545454546	 \\
            -40  0.352126397187386 \\
            -30  0.2169641738911928 \\
            -20  0.13988593776444852 \\ 
            -10  0.11008787045005963 \\
            0    0.10968592629687302 \\
            10   0.10967545554631055 \\ 
    };
    \addlegendentry{\scriptsize distributed};

    \addplot [color=blue, mark=*, dashdotted, line width=1.2pt, mark size=1pt]
    table[row sep=crcr]{%
            -50	 0.99898235 \\
            -40  0.98748343 \\
            -30  0.96733745 \\
            -20  0.93736884 \\
            -10  0.90758478 \\
            0    0.88548567 \\
            10   0.85545438 \\
    };
    \addlegendentry{\scriptsize AMP-DA};
    
  \end{semilogyaxis}
\end{tikzpicture}    
    \vspace{-1.0cm}
    \caption{The average total variation $\dTV$ vs. received signal to noise ratio $\text{SNR}_{\text{rx}}$ for $\uN_\mathrm{s}=\uN_\mathrm{c}=1000$, and ${\uM} = 2^{10}$.}
    \label{fig:tv_vs_snr}
    \vspace{-.3cm}
\end{figure}

In Fig.~\ref{fig:tv_vs_snr}, we evaluate the type estimation performance in terms of $\dTV$ defined in~\eqref{eq:tv_dist_expression} as a function of $\text{SNR}_{\text{rx}}$. As expected, higher $\text{SNR}_{\text{rx}}$ results in lower $\overline{\mathbb{TV}}$ values, and hence, higher estimation accuracy, which is consistent with the channel estimation results in Fig.~\ref{fig:channel_est_vs_AMP_imperfectLSFC}. The centralized decoder performs slightly better, whereas the distributed decoder offers competitive accuracy with better scalability. We further compare our decoders with AMP-DA~\cite{qiao_gunduz_fl}, which let the users pre-equalize the channel to obtain an effective AWGN channel model, and then apply scalar AMP~\cite[Sec.~IV-C]{Meng_NOMA_2021} for type estimation. We emphasize that the pre-equalization step relies on the availability of CSI at the users. Our decoders instead do not require CSI, neither at the users nor the receiver. In our AMP-DA simulations, we normalize the transmitted signal so that the transmit communication power is $1 \text{ mW}$. As shown in Fig.~\ref{fig:tv_vs_snr}, for the received SNR values considered in the figure, AMP-DA results in much higher $\dTV$ than the two TUMA decoders.

\begin{figure}[t!]
\vspace{-3.5mm}
    \centering
    \subfloat[misdetection probability]{
        \begin{tikzpicture}
    \tikzstyle{every node}=[font=\footnotesize] 
    \begin{axis}[
    scale only axis,
    width=1.14in,
    height=1.3in,
    grid=both,
    grid style={line width=.1pt, draw=gray!10},
    major grid style={line width=.2pt,draw=gray!50},
    xmin=0,
    xmax=2000,
    xlabel={\scriptsize sensing blocklength, $\uN_\mathrm{s}$},
    ymin=0,
    ymax=0.8,
    ytick = {0, 0.2, 0.4, 0.6, 0.8},
    yticklabels={
      {\scriptsize 0},
      {\scriptsize 0.2},
      {\scriptsize 0.4},
      {\scriptsize 0.6},
      {\scriptsize \raisebox{-10pt}{0.8}}
    },
    ylabel={\scriptsize average misdetection prob., $p_\text{md}$},
    xtick = {0, 500, 1000, 1500, 2000},
    xticklabels = {0, 500, 1000, 1500, 2000},
    ylabel style = {xshift=-1mm, yshift=-1.5mm},
    xticklabel style = {font=\scriptsize},
    yticklabel style = {font=\scriptsize},
    legend style={draw=none,opacity=.9},
    legend cell align={left}
    ]

    \addplot [color=black, mark=*, line width=1.2pt, mark size=1pt]
    table[row sep=crcr]{%
10.0	0.6977777777777777	 \\ 
100.0	0.36499999999999994	 \\ 
500.0	0.16739130434782604	 \\ 
1000.0	0.12249999999999998	 \\ 
1500.0	0.09894736842105264	 \\ 
1900.0	0.09658536585365855	 \\
    };
  \end{axis}

    \begin{axis}[
        scale only axis,
        width=1.14in,
        height=1.3in,
        xmin=0,
        xmax=2000,
        ymin=0,
        ymax=50,
        axis x line=top,
        axis line style={-},
        axis y line=none,
        xtick={0,500,1000,1500,2000},
        xticklabels={2000,1500,1000,500,0},
        xlabel={\scriptsize communication blocklength, $\uN_\mathrm{c}$},
        xlabel style={yshift=-0.7ex},
        xticklabel style={yshift=-0.7ex, font=\scriptsize}, 
        xtick style={draw=none}, 
    ]
    \end{axis}
\end{tikzpicture}    
        \label{fig:all_metrics_vs_Ns_a}
    }  
    \hspace{-8mm}
    \subfloat[total variation]{
        \begin{tikzpicture}
\tikzstyle{every node}=[font=\footnotesize] 
    \begin{semilogyaxis}[
    scale only axis,
    width=1.14in,
    height=1.3in,
    grid=both,
    grid style={line width=.1pt, draw=gray!10},
    major grid style={line width=.2pt,draw=gray!50},
    xmin=0,
    xmax=2000,
    xtick = {0, 500, 1000, 1500, 2000},
    xticklabels = {0, 500, 1000, 1500, 2000}, 
    xlabel={\scriptsize sensing blocklength, $\uN_\mathrm{s}$},
    ymin=0.004,
    ymax=1,
    ytick = {0.01, 0.1, 1},
    yticklabels={
      {\scriptsize $10^{-2}$},
      {\scriptsize $10^{-1}$},
      {\scriptsize \raisebox{-13pt}{$10^{0}$}}
    },
    ylabel={\scriptsize average TV distance, $\overline{\mathbb{TV}}$},
    ylabel style = {xshift=0mm, yshift=-2.5mm},
    xticklabel style = {font=\scriptsize},
    yticklabel style = {font=\scriptsize},
    legend style={draw=none,opacity=.9,at={(0.50,0.35)},anchor=north},
    legend cell align={left}
    ]

    \addplot [color=black, mark=*, line width=1.2pt, mark size=1pt]
    table[row sep=crcr]{%
10.0	0.005150539748062967	 \\ 
100.0	0.030092580731327017	 \\ 
500.0	0.05666423813963979	 \\ 
1000.0	0.06526538574898438	 \\ 
1500.0	0.06689155755041161	 \\ 
1650.0	0.08194036971742845	 \\ 
1700.0	0.32223302325300807	 \\ 
1800.0	0.8066153634618297	 \\ 
1900.0	0.8396375760856216	 \\
    };  
    \addlegendentry{\scriptsize centralized};

    \addplot [color=red, mark=*, dashed, line width=1.2pt, mark size=1pt]
    table[row sep=crcr]{%
10.0	0.06593797664396768	 \\ 
100.0	0.08681086630961009	 \\ 
500.0	0.10823345756572257	 \\ 
1000.0	0.10970545554631055	 \\ 
1300.0	0.11315504354048421	 \\ 
1400.0	0.17646023692806323	 \\ 
1500.0	0.5476841759414905	 \\ 
1600.0	0.8219543594105182	 \\ 
1700.0	0.8533695851311086	 \\ 
1900.0	0.8853820230961186	 \\
    };  
    \addlegendentry{\scriptsize distributed};
    
  \end{semilogyaxis}
    \begin{axis}[
        scale only axis,
        width=1.14in,
        height=1.3in,
        xmin=0,
        xmax=2000,
        ymin=0,
        ymax=50,
        axis x line=top,
        axis line style={-},
        axis y line=none,
        xtick={0,500,1000,1500,2000},
        xticklabels={2000,1500,1000,500,0},
        xlabel={\scriptsize communication blocklength, $\uN_\mathrm{c}$},
        xlabel style={yshift=-0.7ex},
        xticklabel style={yshift=-0.7ex, font=\scriptsize}, 
        xtick style={draw=none}, 
    ]
    \end{axis}
\end{tikzpicture}    
        \label{fig:all_metrics_vs_Ns_b}
    }
    \vspace{-2mm}
    \subfloat[Wasserstein distance]{
        \begin{tikzpicture}
    \tikzstyle{every node}=[font=\footnotesize] 
        \begin{axis}[
        scale only axis,
        width=1.14in,
        height=1.3in,
        grid=both,
        grid style={line width=.1pt, draw=gray!10},
        major grid style={line width=.2pt,draw=gray!50},
        xtick = {0, 500, 1000, 1500, 2000},
        xticklabels = {0, 500, 1000, 1500, 2000}, 
        xmin=0,
        xmax=2000,
        xlabel={\scriptsize sensing blocklength, $\uN_\mathrm{s}$},
        ymin=0,
        ymax=50,
        ytick = {0, 10, 20, 30, 40, 50},
        yticklabels={
          {\scriptsize 0},
          {\scriptsize 10},
          {\scriptsize 20},
          {\scriptsize 30},
          {\scriptsize 40},
          {\scriptsize \raisebox{-10pt}{50}}
        },
        ylabel={\scriptsize average Wasserstein distance, $\overline{\mathbb{W}}_2$ [m]},
        ylabel style = {xshift=-1mm, yshift=-1.5mm},
        xticklabel style = {font=\scriptsize},
        yticklabel style = {font=\scriptsize},
        legend style={draw=none,opacity=.9,at={(0.4,0.95)},anchor=north},
        legend cell align={left}
        ]

        \addplot [color=black, mark=*, line width=1.2pt, mark size=1pt]
        table[row sep=crcr]{%
10.0	6.304594321475033	 \\ 
100.0	14.04344309851318	 \\ 
500.0	15.96612927680058	 \\ 
1000.0	16.595865722483567	 \\ 
1500.0	16.575856207265	 \\ 
1650.0	16.978633569005115	 \\ 
1700.0	22.589788344305695	 \\ 
1800.0	34.990812018720424	 \\ 
1900.0	35.6374749135196	 \\
        };
        \addlegendentry{\scriptsize centralized};

        \addplot [color=red, mark=*, dashed, line width=1.2pt, mark size=1pt]
        table[row sep=crcr]{%
10.0	27.962479120738584	 \\ 
100.0	23.24578059575684	 \\ 
500.0	22.265049839256132	 \\ 
1000.0	21.963803256638267	 \\ 
1300.0	21.872772411483137	 \\ 
1400.0	23.236448117402376	 \\ 
1500.0	31.4903588995802	 \\ 
1600.0	34.19688443828512	 \\ 
1700.0	35.43485121200646	 \\ 
1900.0	35.85209491914724	 \\
        };
        \addlegendentry{\scriptsize distributed};

        \addplot [color=gray, line width=3pt, mark size=1pt]
        table[row sep=crcr]{%
10	    3.8382072778183045	 \\ 
1990	3.8382072778183045	 \\
        } 
        node [black, pos=.5, inner sep=0, yshift=12pt, align=center] {\scriptsize perfect communication}
        node [black, pos=.5, inner sep=0, yshift=5pt] {\scriptsize $\downarrow$}
        ;
    \end{axis}

    \begin{axis}[
        scale only axis,
        width=1.14in,
        height=1.3in,
        xmin=0,
        xmax=2000,
        ymin=0,
        ymax=50,
        axis x line=top,
        axis line style={-},
        axis y line=none,
        xtick={0,500,1000,1500,2000},
        xticklabels={2000,1500,1000,500,0},
        xlabel={\scriptsize communication blocklength, $\uN_\mathrm{c}$},
        xlabel style={yshift=-0.7ex},
        xticklabel style={yshift=-0.7ex, font=\scriptsize}, 
        xtick style={draw=none}, 
    ]
    \end{axis}
\end{tikzpicture} 
        \label{fig:all_metrics_vs_Ns_c}
    }
    \hspace{-5mm}
    \subfloat[GOSPA-like cost]{
    \begin{tikzpicture}
    \tikzstyle{every node}=[font=\footnotesize] 
        \begin{axis}[
        scale only axis,
        width=1.14in,
        height=1.3in,
        grid=both,
        grid style={line width=.1pt, draw=gray!10},
        major grid style={line width=.2pt,draw=gray!50},
        xmin=0,
        xmax=2000,
        xlabel={\scriptsize sensing blocklength, $\uN_\mathrm{s}$},
        ymin=0,
        ymax=50,
        ytick = {0, 10, 20, 30, 40, 50},
        yticklabels={
          {\scriptsize 0},
          {\scriptsize 10},
          {\scriptsize 20},
          {\scriptsize 30},
          {\scriptsize 40},
          {\scriptsize \raisebox{-10pt}{50}}
        },
        ylabel={\scriptsize average cost, $d^{(37.5 \text{ m},2)}$ [m]},
        ylabel style = {xshift=-1mm, yshift=-1mm},
        xtick = {0, 500, 1000, 1500, 2000},
        xticklabels = {0, 500, 1000, 1500, 2000},
        xticklabel style = {font=\scriptsize},
        yticklabel style = {font=\scriptsize},
        legend style={draw=none,opacity=.9,at={(0.41,0.96)},anchor=north},
        legend cell align={left}
        ]

        \addplot [color=black, mark=*, line width=1.2pt, mark size=1pt]
        table[row sep=crcr]{%
10.0	31.95305790622198	 \\ 
100.0	26.655197317993675	 \\ 
500.0	22.14297418646915	 \\ 
1000.0	21.158647973788518	 \\ 
1500.0	20.34462449508545	 \\ 
1650.0	20.71258626064572	 \\ 
1700.0	25.471830763424318	 \\ 
1800.0	36.85905425095554	 \\ 
1900.0	37.09482746748218	 \\
        };
        \addlegendentry{\scriptsize centralized};

        \addplot [color=red, mark=*, dashed, line width=1.2pt, mark size=1pt]
        table[row sep=crcr]{%
10.0	42.01163126628939	 \\ 
100.0	31.62438762140924	 \\ 
500.0	26.795308206951737	 \\ 
1000.0	25.60949342521858	 \\ 
1300.0	25.583543507577898	 \\ 
1400.0	26.144178079039352	 \\ 
1500.0	33.54454431982002	 \\ 
1600.0	36.10819720348036	 \\ 
1700.0	37.13296084002664	 \\ 
1900.0	37.26197527919144	 \\
        };
        \addlegendentry{\scriptsize distributed};

        \addplot [color=gray, line width=3pt, mark size=1pt]
        table[row sep=crcr]{%
10.0	31.203535524380772	 \\ 
100.0	22.45232360152279	 \\ 
500.0	15.636914181113144	 \\ 
1000.0	13.904157116038976	 \\ 
1500.0	13.112992988158628	 \\
1990.0	12.850883980812764	 \\
}
        node [black, pos=.5, inner sep=0, yshift=-14pt, align=center] {\scriptsize perfect communication}
        node [black, pos=.5, inner sep=0, yshift=-5pt] {\scriptsize $\uparrow$}
        ;

        \addplot [color=black, mark=*, line width=1.2pt, mark size=1pt]
        table[row sep=crcr]{%
10.0	31.95305790622198	 \\ 
100.0	26.655197317993675	 \\ 
500.0	22.14297418646915	 \\ 
1000.0	21.158647973788518	 \\ 
1500.0	20.34462449508545	 \\ 
1650.0	20.71258626064572	 \\ 
1700.0	25.471830763424318	 \\ 
1800.0	36.85905425095554	 \\ 
1900.0	37.09482746748218	 \\
        };

      \end{axis}

    \begin{axis}[
        scale only axis,
        width=1.14in,
        height=1.3in,
        xmin=0,
        xmax=2000,
        ymin=0,
        ymax=50,
        axis x line=top,
        axis line style={-},
        axis y line=none,
        xtick={0,500,1000,1500,2000},
        xticklabels={2000,1500,1000,500,0},
        xlabel={\scriptsize communication blocklength, $\uN_\mathrm{c}$},
        xlabel style={yshift=-0.7ex},
        xticklabel style={yshift=-0.7ex, font=\scriptsize}, 
        xtick style={draw=none}, 
    ]
    \end{axis}
    
    \end{tikzpicture}
    \label{fig:all_metrics_vs_Ns_d}
    }
    \caption{Performance results vs. sensing blocklength $\uN_\mathrm{s}$ for $\uK = 200$ sensors, $\uT = 50$ targets, $\text{SNR}_\text{rx} = 10\dB$, and $\uM = 2^{10}$ messages. The total blocklength is $\uN = 2000$, with communication blocklength $\uN_\mathrm{c} = \uN - \uN_\mathrm{s}$.}
    \label{fig:all_metrics_vs_Ns}
    \vspace{-0.6cm}
\end{figure}
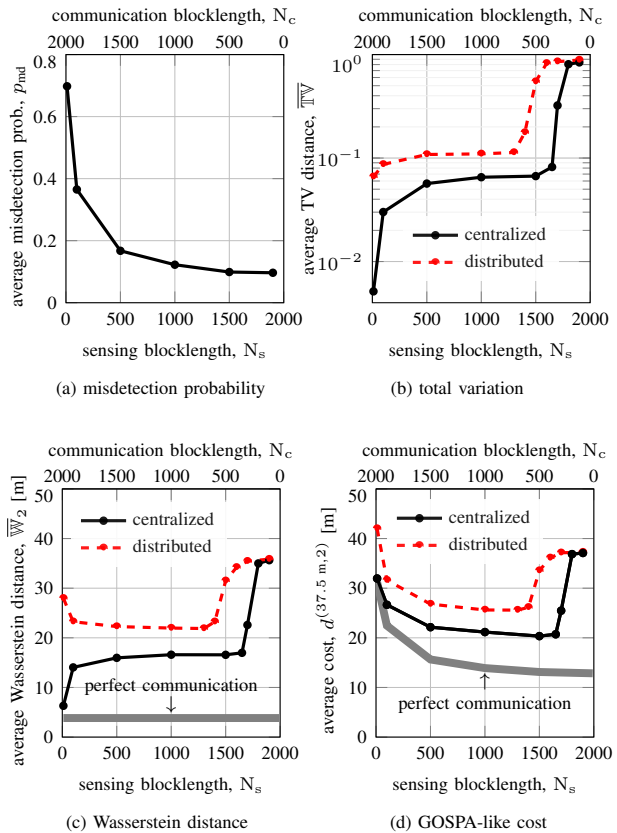

In Fig.~\ref{fig:all_metrics_vs_Ns}, we explore the tradeoff between sensing and communication under a total blocklength of \( \uN = 2000 \), with $\log_2\uM=10$ bits and $\text{SNR}_\text{rx} = 10 \text{ dB}$. We vary the sensing blocklength \( \uN_{\mathrm{s}} \) from $10$ to $1900$ and observe the resulting performance trends. As shown in Fig.~\ref{fig:all_metrics_vs_Ns}(a), increasing the number of sensing symbols \( \uN_{\mathrm{s}} \) improves detection reliability, leading to a lower empirical misdetection probability. However, this comes at the cost of fewer symbols available for communication, which limits the decoder’s ability to distinguish between messages. This degradation is evident in the rising total variation distance~$\overline{\mathbb{TV}}$ in Fig.~\ref{fig:all_metrics_vs_Ns}(b). The Wasserstein distance $\overline{\mathbb{W}}_2$ in Fig.~\ref{fig:all_metrics_vs_Ns}(c), which captures average localization error over detected targets, exhibits a similar trend: improved detection at small \( \uN_{\mathrm{s}} \) is offset by rising communication errors at large values. The GOSPA-like cost in Fig.~\ref{fig:all_metrics_vs_Ns}(d) integrates all three aspects: sensing, quantization, and communication, to provide a unified performance measure. It reveals a clear tradeoff with a minimum in the range \( 1000 \leq \uN_{\mathrm{s}} \leq 1500 \), where the balance between detection accuracy and communication reliability is optimal. This highlights the importance of jointly optimizing sensing and communication resources in practical~integrated~systems.

Finally, in Fig.~\ref{fig:perf_vs_J}, we analyze the impact of the quantization resolution~\(\log_2 \uM \) on communication performance and overall estimation accuracy. As seen in Fig.~\ref{fig:perf_vs_J}(a), the average total variation~$\overline{\mathbb{TV}}$ increases with the number of quantization bits due to the expanded message space, which makes decoding more challenging. In contrast, Fig.~\ref{fig:perf_vs_J}(b) shows that the GOSPA-like cost exhibits a non-monotonic trend, achieving a minimum at an intermediate quantization level around $10$ bits. This reflects a fundamental tradeoff: higher resolution improves target localization but increases communication errors. Note that the performance peak at around $10$ bits cannot be achieved with orthogonal codebooks. Indeed, the total number of codewords~$\overline{\uM} = 9216$ far exceeds the blocklength $\uN_\mathrm{c} = 1000$. 

\begin{figure}[t!]
\vspace{-5mm}
    \centering
    \hspace{-5mm}
    \subfloat[total variation]{
        \begin{tikzpicture}
    \tikzstyle{every node}=[font=\footnotesize] 
        \begin{semilogyaxis}[
        scale only axis,
        width=1.2in,
        height=1.3in,
        grid=both,
        grid style={line width=.1pt, draw=gray!10},
        major grid style={line width=.2pt,draw=gray!50},
        xmin=1,
        xmax=13,
        xlabel={\scriptsize number of quantization bits, $\log_2 \uM$},
        ymin=0.01,
        ymax=1,
        ytick = {0.01, 0.1, 1},
        xtick = {2,4,6,8,10,12},
        ylabel={\scriptsize average total variation, $\overline{\mathbb{TV}}$},
        ylabel style = {xshift=-1mm, yshift=-2mm},
    xticklabel style = {font=\scriptsize},
    yticklabel style = {font=\scriptsize},
tick label style={font=\scriptsize},
        legend style={draw=none,opacity=.9,at={(0.50,0.95)},anchor=north},
        legend cell align={left}
        ]
    
        \addplot [color=black, mark=*, line width=1.2pt, mark size=1pt]
        table[row sep=crcr]{%
2.0	0.02743080485321978	 \\ 
4.0	0.04779191707514203	 \\ 
6.0	0.056381372881989085	 \\ 
8.0	0.058489166393780025	 \\ 
10.0	0.06526538574898438	 \\ 
12.0	0.06807254319065198	 \\
        };
        \addlegendentry{\scriptsize centralized};

        \addplot [color=red, mark=*, dashed, line width=1.2pt, mark size=1pt]
        table[row sep=crcr]{%
2.0	0.0316580844193732	 \\ 
4.0	0.07420790657400433	 \\ 
6.0	0.10392021422438462	 \\ 
8.0	0.10524020848361292	 \\ 
10.0	0.10970545554631055	 \\ 
12.0	0.2983164138387835	 \\
        };
        \addlegendentry{\scriptsize distributed};

      \end{semilogyaxis}
    \end{tikzpicture}
        \label{fig:perf_vs_J_a}
    }
    \hspace{-8mm}
    \subfloat[GOSPA-like cost]{
        \begin{tikzpicture}
    \tikzstyle{every node}=[font=\footnotesize] 
        \begin{axis}[
        scale only axis,
        width=1.2in,
        height=1.3in,
        grid=both,
        grid style={line width=.1pt, draw=gray!10},
        major grid style={line width=.2pt,draw=gray!50},
        xmin=1,
        xmax=13,
        xlabel={\scriptsize number of quantization bits, $\log_2 \uM$},
        ymin=0,
        ymax=50,
        ytick = {0, 10, 20, 30, 40, 50},
        xtick = {2,4,6,8,10,12},
        ylabel={\scriptsize average cost, $d^{(37.5 \text{ m},2)}$ [m]},
        ylabel style = {xshift=-1mm, yshift=-1mm},
    xticklabel style = {font=\scriptsize},
    yticklabel style = {font=\scriptsize},
        legend style={draw=none,opacity=.9,at={(0.65,0.95)},anchor=north},        
        legend cell align={left}
        ]
    
        \addplot [color=black, mark=*, line width=1.2pt, mark size=1pt]
        table[row sep=crcr]{%
2.0	    61.61181686264629	 \\ 
4.0	    35.23545528612795	 \\ 
6.0	    25.186231848872895	 \\ 
8.0	    21.54080449647866	 \\ 
10.0	21.158647973788518	 \\ 
12.0	21.317116458671504	 \\
        };
        \addlegendentry{\scriptsize centralized};

        \addplot [color=red, mark=*, dashed, line width=1.2pt, mark size=1pt]
        table[row sep=crcr]{%
2.0	    62.12813876508836	 \\ 
4.0	    37.03701150497514	 \\ 
6.0	    27.865892111512917	 \\ 
8.0	    25.600966421855734	 \\ 
10.0	25.60949342521858	 \\ 
12.0	29.466260053919634	 \\
        };
        \addlegendentry{\scriptsize distributed};

        \addplot [color=gray, line width=3pt, mark size=1pt]
        table[row sep=crcr]{%
2.0	    61.175506960491455	 \\ 
4.0	    32.7213244856551	 \\ 
6.0	    19.834065417524055	 \\ 
8.0	    15.166672695326483	 \\ 
10.0	13.901334134175995	 \\ 
12.0	12.822158444197157	 \\
        }
        node [black, pos=.5, inner sep=0, yshift=-49pt, xshift=4pt, align=center] {\scriptsize perfect \\ \scriptsize communication}
        node [black, pos=.5, inner sep=0, yshift=-37pt, xshift=15pt, rotate=315] {\scriptsize $\uparrow$}
        ;

        \addplot [color=black, mark=*, line width=1.2pt, mark size=1pt]
        table[row sep=crcr]{%
2.0	    61.61181686264629	 \\ 
4.0	    35.23545528612795	 \\ 
6.0	    25.186231848872895	 \\ 
8.0	    21.54080449647866	 \\ 
10.0	21.158647973788518	 \\ 
12.0	21.317116458671504	 \\
        };

        \addplot [color=red, mark=*, dashed, line width=1.2pt, mark size=1pt]
        table[row sep=crcr]{%
2.0	    62.12813876508836	 \\ 
4.0	    37.03701150497514	 \\ 
6.0	    27.865892111512917	 \\ 
8.0	    25.600966421855734	 \\ 
10.0	25.60949342521858	 \\ 
12.0	29.466260053919634	 \\
        };

      \end{axis}
\end{tikzpicture}
        \label{fig:perf_vs_J_b}
    }
    \caption{Performance results vs. quantization bits $\log_2{\uM}$ for $\uK = 200$ sensors, $\uT = 50$ targets, $\text{SNR}_\text{rx} = 10\dB$, and sensing blocklength $\uN_\mathrm{s} = 1000$.}
    \label{fig:perf_vs_J}
\end{figure}
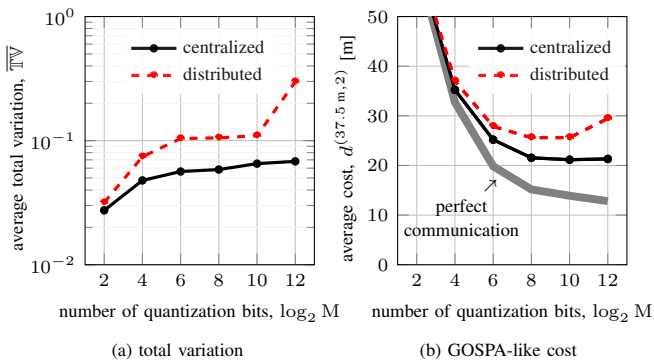

\section{Conclusion} \label{Sec:Conclusion}

We have proposed TUMA for D-MIMO systems---a novel framework that allows estimation of message multiplicities under fading channels without requiring CSI at the users or the receiver. 
A key element of our design is location-based codebook partitioning, where users with similar large-scale fading coefficients select the same codebook. This grouping reduces path-loss variability across users. 
To decode the transmitted message set and their multiplicities, we developed a Bayesian decoder based on multisource approximate message passing. Centralized and distributed implementations were developed, offering a tradeoff between computational complexity (scalability) and performance in large-scale D-MIMO systems. We further demonstrated the use of TUMA in a practical application involving multi-target localization. Our framework includes a probabilistic sensing model, position quantization, and an unsourced communication layer built on the proposed TUMA scheme. We demonstrated through simulations that message collisions are highly prevalent in realistic localization scenarios, reinforcing the need for multiplicity-aware decoding. Our simulation results shed light on the fundamental tradeoffs between sensing, quantization, and communication that arise in the considered multi-target localization scenario. 

The quantization resolutions in the simulations were limited to $12$ bits, since the exponential growth of the message space makes AMP-based decoding computationally infeasible at higher resolutions. Larger information payloads can be accommodated using, for example, the CCS-AMP approach~\cite{amalladine_ccs_amp}, in which larger payloads are split in small sub-blocks, each sub-block is decoded using AMP, and the sub-blocks are stitched together via an outer code. See~\cite{deekshith_tuma_bound} for an application of this approach to TUMA.

\appendix

\subsection{Derivations for the Onsager Correction} \label{app:onsager}

For notational simplicity, as in Section~\ref{subsec:denoiser}, we omit the AMP iteration index~$(t)$ and use $\vecrho_{u,1:k_u}$ and $\vecr_{u}$ to denote $\vecrho_{u,m,1:k_{u,m}}$ and $\vecr_{u,m}$, respectively. The Onsager correction term $\mathbf{Q}_u \in \mathbb{C}^{\uF \times \uF}$ is defined in~\eqref{onsager_expression} as the average Jacobian matrix of the denoiser function $\eta(\cdot)$. Here, for a complex number $r=r_x+jr_y$, the Wirtinger derivative is defined as $\partial (\cdot)/\partial r = \left( \partial (\cdot)/\partial r_x - j \partial (\cdot)/\partial r_y   \right)/2$. 
Using \eqref{eqn:posterior_mults}--\eqref{eqn:mmse_est}, we decompose the denoiser component $[\eta(\vecr_u)]_b$~as
\begin{align} 
[\eta(\vecr_u)]_b &= [\vecr_u]_b \underbrace{\sum_{k=1}^{\uK} \underbrace{\frac{A_b(\vecr_u,k)}{B(\vecr_u,k)}}_{F_b(\vecr_u, k)} \cdot \underbrace{\frac{C(\vecr_u,k)}{D(\vecr_u)}}_{G(\vecr_u, k)}}_{H_b(\vecr_u)},
\end{align} 
where 
\begin{subequations}    
\begin{align} A_b(\vecr_u,k) &= \int_{\setD_u^k} c_{b,u,k} \, p(\vecr_u \mid  \vecrho_{u,1:k}) \, \mathrm{d}\vecrho_{u,1:k}, \\
c_{b,u,k} &= \frac{\sqrt{\uE_\mathrm{c}} [\textstyle \sum_{i=1}^k\matSigma(\vecrho_{u,i})]_{b,b}}{[\mathbf{T}]_{b,b} + \uE_\mathrm{c} [\sum_{i=1}^k\matSigma(\vecrho_{u,i})]_{b,b} }, \label{deriv-eqn-cbp}  \\
B(\vecr_u,k) &= \int_{\setD_u^k} p(\vecr_u \mid  \vecrho_{u,1:k}) \, \mathrm{d}\vecrho_{u,1:k}, \\ 
C(\vecr_u,k) &= \int_{\setD_u^k} p(\vecr_u \mid  \vecrho_{u,1:k}) \, p(k_u = k) \, \mathrm{d}\vecrho_{u,1:k}, \\ D(\vecr_u) &= \sum_{\ell=0}^{\uK} C(\vecr_u, \ell) . \end{align}
\end{subequations}
Note that in (\ref{deriv-eqn-cbp}) we used that $\mathbf{T}$ and $\sum_{i=1}^k \matSigma(\vecrho_{u,i})$ are both diagonal matrices. Then, the derivative of the denoiser is computed as 
\begin{subequations} \label{eqn_deriv30}
    \begin{align} \frac{\partial [\eta(\vecr_u)]_b}{\partial [\vecr_u]_a} &= \frac{\partial}{\partial [\vecr_u]_a} \left( [\vecr_u]_b H_b(\vecr_u) \right) \\ &= \delta(a-b) H_b(\vecr_u) + [\vecr_u]_b \frac{\partial H_b(\vecr_u)}{\partial [\vecr_u]_a},
    \end{align} 
\end{subequations}
where we note that $H_b(\vecr_u) \in \mathbb{R}$. Since $H_b(\vecr_u)=\sum_{k=1}^\uK F_b(\vecr_u,k) \cdot G(\vecr_u,k)$, we have
\begin{align}
    &\frac{\partial H_b(\vecr_u)}{\partial [\vecr_u]_a} \notag \\ & \qquad= \sum_{k=1}^\uK \left( G(\vecr_u, k) \frac{\partial F_b(\vecr_u, k)}{\partial [\vecr_u]_a} + F_b(\vecr_u, k) \frac{\partial G(\vecr_u,k)}{\partial [\vecr_u]_a} \right). \label{eqn_deriv31}
\end{align}
The derivatives of $F_b(\vecr_u,k)=A_b(\vecr_u,k)/B(\vecr_u,k)$ and $G(\vecr_u,k)=C(\vecr_u,k)/D(\vecr_u)$ are given by
\begin{subequations} \label{eqn_deriv32}
    \begin{align} \frac{\partial F_b(\vecr_u,k)}{\partial [\vecr_u]_a} &= \frac{\frac{\partial A_b(\vecr_u,k)}{\partial [\vecr_u]_a} - F_b(\vecr_u,k) \frac{\partial B(\vecr_u,k)}{\partial [\vecr_u]_a}}{B(\vecr_u,k)}, \label{eqn_deriv_Fb}\\ \frac{\partial G(\vecr_u,k)}{\partial [\vecr_u]_a} &= \frac{\frac{\partial C(\vecr_u,k)}{\partial [\vecr_u]_a} - G(\vecr_u,k) \frac{\partial D(\vecr_u)}{\partial [\vecr_u]_a}}{D(\vecr_u)}. \label{eqn_deriv_G}
    \end{align}
\end{subequations}
The derivatives on the RHS of \eqref{eqn_deriv_Fb} and \eqref{eqn_deriv_G} can be evaluated as follows
\begin{subequations} \label{deriv_eqn_allcomps}
    \begin{align} \frac{\partial A_b(\vecr_u,k)}{\partial [\vecr_u]_a} &= \int_{\setD_u^k} c_{b,u,k} \frac{\partial p(\vecr_u \mid \vecrho_{u,1:k})}{\partial [\vecr_u]_a} \, \mathrm{d}\vecrho_{u,1:k}, \\ 
    \frac{\partial B(\vecr_u,k)}{\partial [\vecr_u]_a} &= \int_{\setD_u^k} \frac{\partial p(\vecr_u \mid \vecrho_{u,1:k})}{\partial [\vecr_u]_a} \, \mathrm{d}\vecrho_{u,1:k}, \\ \displaybreak
    \frac{\partial C(\vecr_u,k)}{\partial [\vecr_u]_a} &= \int_{\setD_u^k} p(k_u=k)\, \frac{\partial p(\vecr_u \mid \vecrho_{u,1:k})}{\partial [\vecr_u]_a} \, \mathrm{d}\vecrho_{u,1:k}, \\ 
    \frac{\partial D(\vecr_u)}{\partial [\vecr_u]_a} &= \sum_{\ell=0}^{\uK} \int_{\setD_u^k} p(k_u=\ell) \, \frac{\partial p(\vecr_u \mid \vecrho_{u,1:\ell})}{\partial [\vecr_u]_a} \, \mathrm{d}\vecrho_{u,1:\ell}, \end{align}
\end{subequations}
where we have exchanged the derivative and the integral. This is justified by the Leibniz integral rule, since $p(\vecr_u \mid \vecrho_{u,1:k}) = \mathcal{C}\mathcal{N}(\vecr_u; \veczero, \mathbf{T}+  \uE_\mathrm{c} \sum_{i=1}^k \matSigma(\vecrho_{u,i}))$ is a smooth function of $\vecr_u$ and the integration domain \(\setD_u^k\) is fixed and bounded. The derivative of $p(\vecr_u \mid \vecrho_{u,1:k})$ is then
\begin{align}
    &\frac{\partial p(\vecr_u \mid \vecrho_{u,1:k})}{\partial [\vecr_u]_a} \notag \\
    &= p(\vecr_u \mid \vecrho_{u,1:k}) \frac{\partial}{\partial [\vecr_u]_a} \sum_{f=1}^{\uF} \frac{-([\vecr_u]^2_{f,x} + [\vecr_u]^2_{f,y})}{[\mathbf{T}]_{f,f} +  \uE_\mathrm{c} [\sum_{i=1}^k \matSigma(\vecrho_{u,i})]_{f,f}}. \label{eqn_deriv_expression}
\end{align}
To evaluate the Wirtinger derivative on the RHS of \eqref{eqn_deriv_expression}, note that
\begin{align}
    \frac{\partial \sum_{f=1}^{\uF} \frac{-([\vecr_u]^2_{f,x} + [\vecr_u]^2_{f,y})}{[\mathbf{T} + \uE_\mathrm{c} \sum_{i=1}^k \matSigma(\vecrho_{u,i})]_{f,f}} }{\partial [\vecr_u]_{a,x}}  = \frac{-2[\vecr_u]_{a,x}}{[\mathbf{T} +  \uE_\mathrm{c} \sum_{i=1}^k \matSigma(\vecrho_{u,i})]_{a,a}}, 
\end{align}
and
\begin{align}
    \frac{\partial \sum_{f=1}^{\uF} \frac{-([\vecr_u]^2_{f,x} + [\vecr_u]^2_{f,y})}{[\mathbf{T} +  \uE_\mathrm{c} \sum_{i=1}^k \matSigma(\vecrho_{u,i})]_{f,f}} }{\partial [\vecr_u]_{a,y}} = \frac{-2[\vecr_u]_{a,y}}{[\mathbf{T} +  \uE_\mathrm{c} \sum_{i=1}^k \matSigma(\vecrho_{u,i})]_{a,a}}.
\end{align} 
Hence,
\begin{align}
    \frac{\partial \sum_{f=1}^{\uF} \frac{-([\vecr_u]^2_{f,x} + [\vecr_u]^2_{f,y})}{[\mathbf{T} +  \uE_\mathrm{c} \sum_{i=1}^k \matSigma(\vecrho_{u,i})]_{f,f}} }{\partial [\vecr_u]_{a}} = \frac{-[\vecr_u]^{\ast}_{a}}{[\mathbf{T} +  \uE_\mathrm{c} \sum_{i=1}^k \matSigma(\vecrho_{u,i})]_{a,a}}. \label{eqn_deriv_wirting_first}
\end{align}
Substituting \eqref{eqn_deriv_wirting_first} in \eqref{eqn_deriv_expression}, we have
\begin{align}
    \frac{\partial p(\vecr_u \mid \vecrho_{u,1:k})}{\partial [\vecr_u]_a}
    &= \frac{-[\vecr_u]^{\ast}_{a} \, p(\vecr_u \mid \vecrho_{u,1:k})}{[\mathbf{T} +  \uE_\mathrm{c} \sum_{i=1}^k \matSigma(\vecrho_{u,i})]_{a,a}}. \label{deriv_eq_main_simpler}
\end{align}
Substituting \eqref{deriv_eq_main_simpler} into \eqref{deriv_eqn_allcomps}, we obtain
\begin{subequations}     \label{deriv_local_explicit}
\begin{align}
    &\frac{\partial A_b(\vecr_u,k)}{\partial [\vecr_u]_a} \notag \\
    & = -[\vecr_u]_a^\ast \int_{\setD_u^k} \frac{c_{b,u,k} p(\vecr_u \mid \vecrho_{u,1:k})}{[\mathbf{T} + \uE_\mathrm{c} \sum_{i=1}^k \matSigma(\vecrho_{u,i})]_{a,a}} \, \mathrm{d}\vecrho_{u,1:k},  \\ 
    &\frac{\partial B(\vecr_u,k)}{\partial [\vecr_u]_a} \notag \\
    & = -[\vecr_u]_a^\ast \int_{\setD_u^k} \frac{p(\vecr_u \mid \vecrho_{u, 1:k})}{[\mathbf{T} + \uE_\mathrm{c} \sum_{i=1}^k \matSigma(\vecrho_{u,i})]_{a,a}} \, \mathrm{d}\vecrho_{u,1:k}, \\
    &\frac{\partial C(\vecr_u,k)}{\partial [\vecr_u]_a} \notag \\
    & = -[\vecr_u]_a^\ast \int_{\setD_u^k} \frac{p(k_u=k)\,  p(\vecr_u \mid \vecrho_{u,1:k})}{[\mathbf{T} + \uE_\mathrm{c} \sum_{i=1}^k \matSigma(\vecrho_{u,i})]_{a,a}} \, \mathrm{d}\vecrho_{u,1:k}, \\
    &\frac{\partial D(\vecr_u)}{\partial [\vecr_u]_a} \notag \\
    & = -[\vecr_u]_a^\ast \sum_{\ell=0}^{\uK} \int_{\setD_u^\ell} \frac{p(k_u=\ell)\, p(\vecr_u \mid \vecrho_{u,1:\ell})}{[\mathbf{T} + \uE_\mathrm{c} \sum_{i=1}^k \matSigma(\vecrho_{u,i})]_{a,a}} \, \mathrm{d}\vecrho_{u,1:\ell}.  
\end{align}
\end{subequations}
We obtain the desired result by combining \eqref{eqn_deriv30}--\eqref{eqn_deriv32} and \eqref{deriv_local_explicit}.

\subsection{Distributed Multisource AMP}\label{app:dist_AMP}

In the distributed AMP setup, each AP $b$ processes its local effective received signal $\vecr_{b,u,m} \in \mathbb{C}^{\uA}$ for each codeword $\vecc_{u,m}$. The local likelihood $p_b(\vecr_{b,u,m} \mid \vecrho_{u,1:k})$ is given by
\begin{align}
p_b(\vecr_{b,u,m} \mid \vecrho_{u,1:k}) = \mathcal{CN}(\vecr_{b,u,m}; \veczero, \text{Cov}_b), \label{eqn_dist_likelihood}
\end{align}
where $\text{Cov}_b = \mathbf{T}_b + \uE_\mathrm{c} \sum_{i=1}^k \matSigma_b(\vecrho_{u,i})$ is the local $\uA \times \uA$ covariance matrix. Here, $\mathbf{T}_b$ is the covariance matrix of the local residual noise computed at AP~$b$ and $\matSigma_b(\vecrho)=\gamma_b(\vecrho) \mathbf{I}_{\uA}$ is the LSFC between the user in position~$\vecrho$ and AP~$b$. Let $\text{Cov}= \mathbf{T} + \uE_\mathrm{c} \sum_{i=1}^k \matSigma(\vecrho_{u,i})$ denote the global $\uF\times \uF$ covariance matrix. The key observation is that, since both $\matSigma(\vecrho_{u,i})$ and $\mathbf{T}$ have a block diagonal structure, with the $\{\matSigma_b(\vecrho_{u,i})\}$ and the $\{\mathbf{T}_b\}$ as blocks, then
\begin{equation}
p(\vecr_{u,m} \mid \vecrho_{u,1:k}) = \prod_{b=1}^{\uB} p_b(\vecr_{b,u,m} \mid \vecrho_{u,1:k}).
\end{equation} 
The Onsager correction term in distributed AMP is computed locally at each AP~$b$ as $
[\mathbf{Q}_{u,b}]_{a,c} = (1/\uM) \sum_{m=1}^{{\uM}} \partial [\eta_{u}(\vecr_{b,u,m})]_c/\partial [\vecr_{b,u,m}]_a$, and aggregated at the CPU as
$[\mathbf{Q}_u]_{a,c} = \sum_{b=1}^{\uB} [\mathbf{Q}_{u,b}]_{a,c}.$


\begin{IEEEbiography}[{\includegraphics[width=1in,height=1.25in,clip,keepaspectratio]{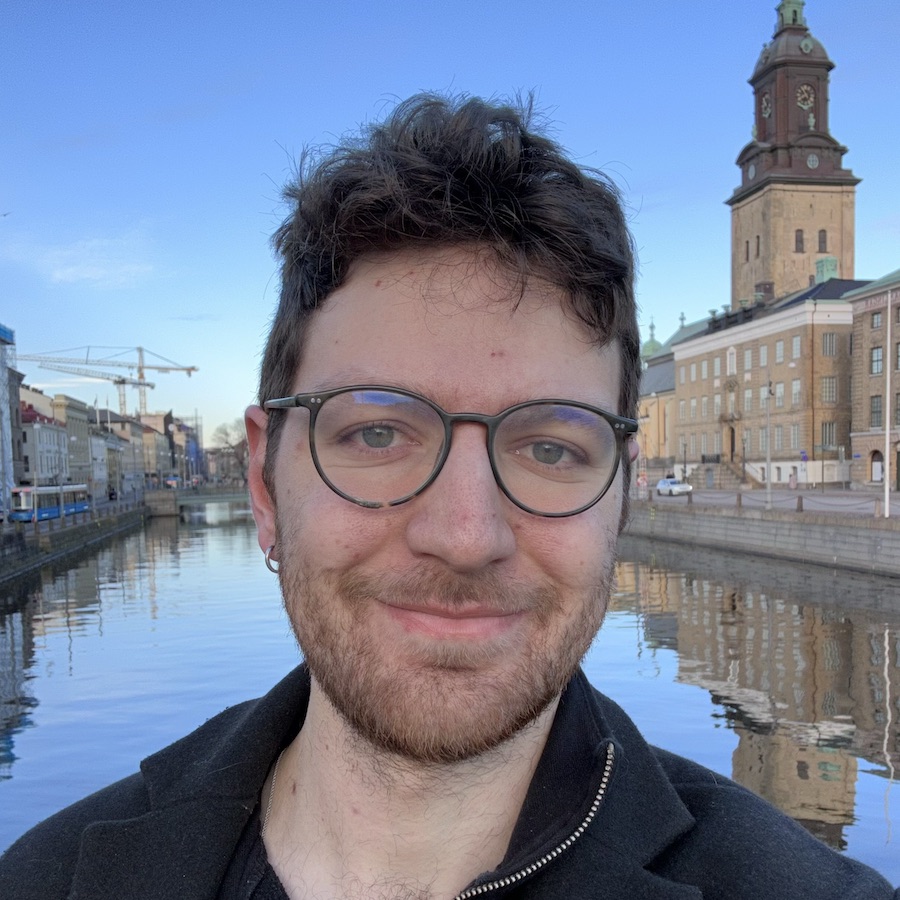}}]{Kaan Okumus}
(Graduate Student Member, IEEE) received the B.Sc. (Hons.) degree in electrical and electronics engineering from Middle East Technical University (METU), Ankara, Türkiye, and the M.Sc. degree in communication systems from École Polytechnique Fédérale de Lausanne (EPFL), Lausanne, Switzerland, in 2023. He is currently pursuing the Ph.D. degree with the Communication Systems Group, Department of Electrical Engineering, Chalmers University of Technology, Gothenburg, Sweden. His research interests include massive random access, wireless communications, and information theory.
\end{IEEEbiography}

\begin{IEEEbiography}[{\includegraphics[width=1in,height=1.25in,clip,keepaspectratio]{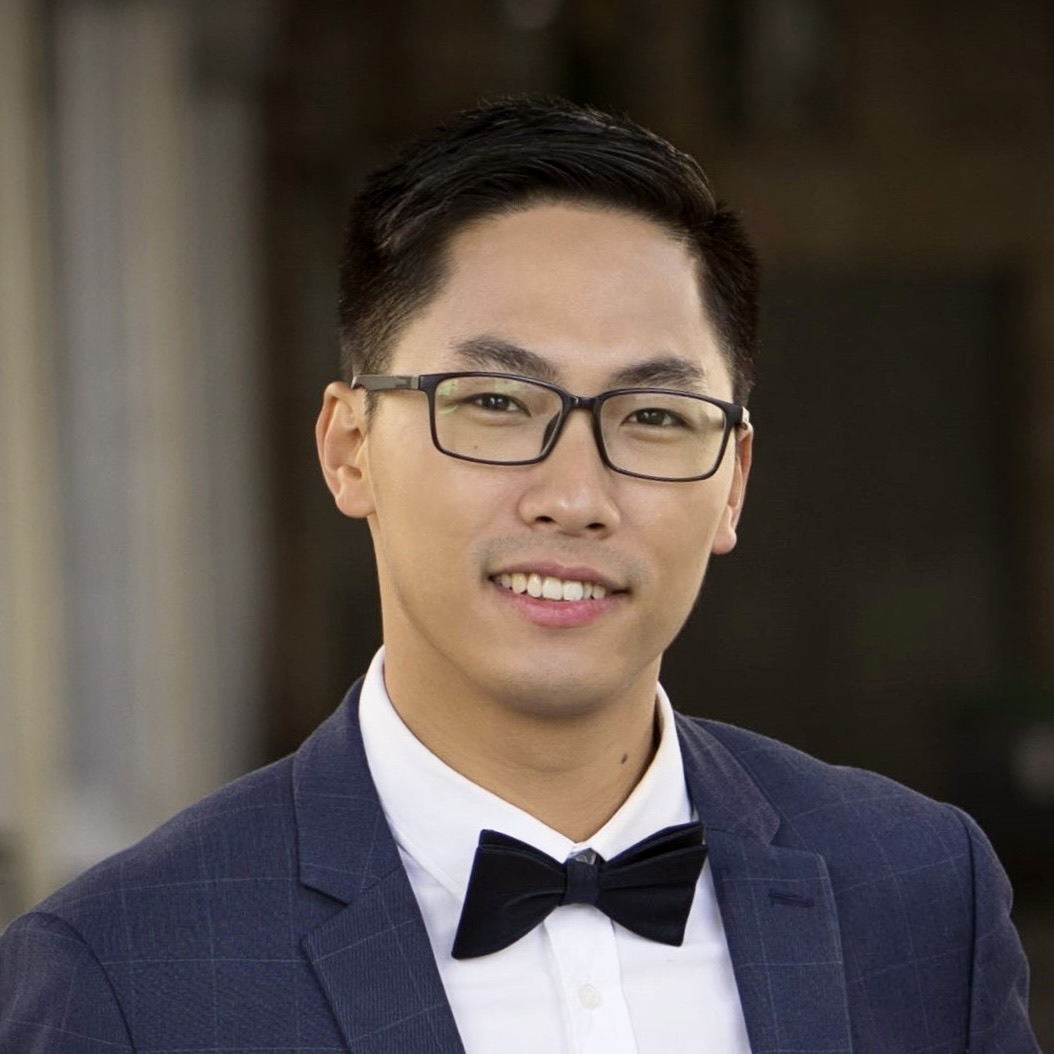}}]{Khac-Hoang Ngo}
(Member, IEEE) received the B.E. degree (Hons.) in electronics and telecommunications from the University of Engineering and Technology, Vietnam National University, Hanoi, Vietnam, in 2014, and the M.Sc. (Hons.) and Ph.D. degrees in wireless communications from the CentraleSupélec, Paris-Saclay University, France, in 2016 and 2020, respectively. His Ph.D. thesis was also realized with Paris Research Center, Huawei Technologies, France. Since September 2024, he has been an Assistant Professor with Linköping University, Sweden. From 2020 to 2024, he was a Post-Doctoral Researcher with the Chalmers University of Technology, Sweden. His research interests include wireless communications and information theory, with an emphasis on massive random access, privacy and security of machine learning, information freshness, MIMO, and non-coherent communications. He is serving as an Associate Editor of IEEE Wireless Communications Letters, and the Chair of the IEEE Information Theory Student \& Outreach Subcommittee. He received the Signal, Image and Vision Ph.D. Thesis Prize by Club EEA, GRETSI, and GdR-ISIS, France, in 2021.
\end{IEEEbiography}

\vspace{-0.5cm}

\begin{IEEEbiography}[{\includegraphics[width=1in,height=1.25in,clip,keepaspectratio]{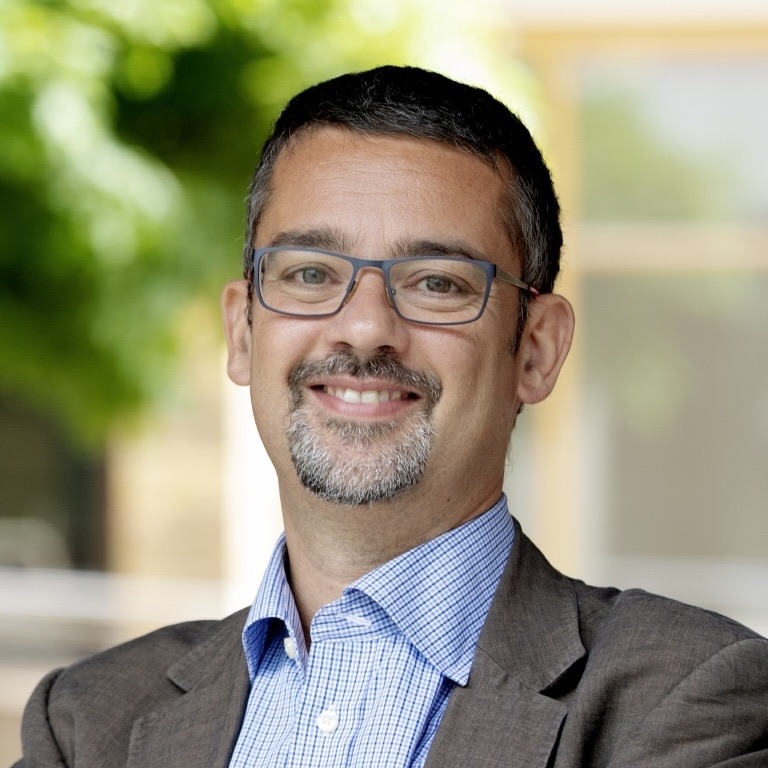}}]{Giuseppe Durisi}
(Senior Member, IEEE) received the Laurea (summa cum laude) and Ph.D. degrees from the Politecnico di Torino, Italy, in 2001 and 2006, respectively. From 2002 to 2006, he was with the Istituto Superiore Mario Boella, Turin, Italy. From 2006 to 2010, he was a Post-Doctoral
Researcher with ETH Zurich, Zürich, Switzerland.
In 2010, he joined the Chalmers University of Technology, Gothenburg, Sweden, where he is currently a
Professor with the Communication Systems Group.
His research interests are in the areas of communication and information theory and machine learning. He was a recipient
of the 2013 IEEE ComSoc Best Young Researcher Award for the Europe,
Middle East, and Africa region, and is the co-author of a paper that won the
Student Paper Award at the 2012 International Symposium on Information
Theory and of a paper that won the 2013 IEEE Sweden VT-COM-IT Joint
Chapter Best Student Conference Paper Award. From 2015 to 2021, he served
as an Associate Editor for \textsc{IEEE Transactions on Communications}.
Since 2024, he has been an Associate Editor for \textsc{IEEE Transactions on
Information Theory}.
\end{IEEEbiography}

\vspace{-0.5cm}

\begin{IEEEbiography}[{\includegraphics[width=1in,height=1.25in,clip,keepaspectratio]{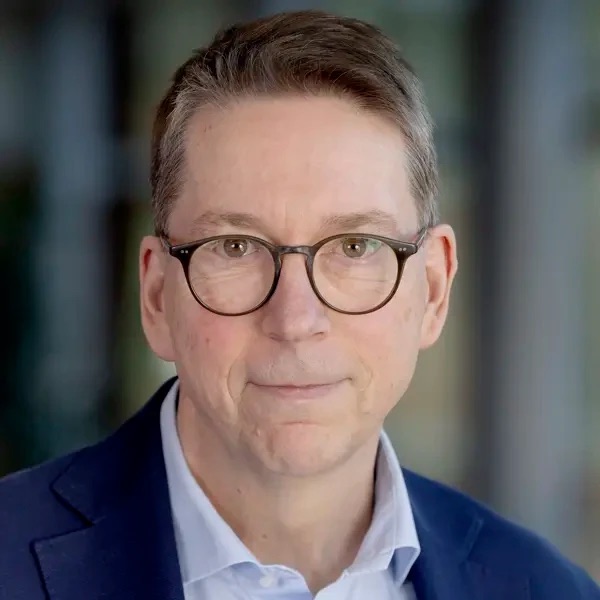}}]{Erik G. Ström}
(Fellow, IEEE) received the M.S.
degree in electrical engineering from the Royal
Institute of Technology (KTH), Stockholm, Sweden,
in 1990, and the Ph.D. degree in electrical engineering from the University of Florida, Gainesville, in
1994. He accepted a post-doctoral position at the
Department of Signals, Sensors, and Systems, KTH,
in 1995. In February 1996, he was appointed as
an Assistant Professor at KTH, and in June 1996,
he joined the Chalmers University of Technology,
Gothenburg, Sweden, where he has been a Professor
of communication systems since June 2003. He currently heads the Division of
Communications, Antennas, and Optical Networks and is the Director of
Chalmers’ Area-of-Advance Information and Communication Technology.
Since 1990, he has acted as a consultant for the Educational Group for Individual Development, Stockholm. His research interests include signal processing
and communication theory in general, and constellation labelings, channel
estimation, synchronization, multiple access, medium access, multiuser detection, wireless positioning, and vehicular communications in particular. He was
a member of the Board of the IEEE VT/COM Swedish Chapter 2000–2006.
He received the Chalmers Pedagogical Prize in 1998, the Chalmers Ph.D.
Supervisor of the Year Award in 2009, and the Chalmers Area of Advance
Award in 2020. He has been a Senior Editor of \textsc{IEEE Transactions on Intelligent Transportation Systems}, a contributing author and associate
editor for Roy. Admiralty Publishers FesGas-series, and was a Co-Guest Editor
for \textsc{Proceedings of the IEEE} special issue on Vehicular Communications
in 2011 and \textsc{IEEE Journal on Selected Areas in Communications}
special issues on Signal Synchronization in Digital Transmission Systems in
2001 and on Multiuser Detection for Advanced Communication Systems and
Networks in 2008.
\end{IEEEbiography}

\end{document}